\pdfoutput=1
\documentclass[11pt]{article}
\usepackage[margin=1in]{geometry}
\usepackage[T1]{fontenc}
\usepackage{lmodern}
\usepackage{amsmath,amssymb,amsthm,mathtools}
\usepackage{booktabs,tabularx,array}
\newcolumntype{Y}{>{\raggedright\arraybackslash}X}
\usepackage{enumitem}
\usepackage[dvipsnames]{xcolor}
\usepackage[colorlinks=true,linkcolor=blue,citecolor=blue,urlcolor=blue]{hyperref}
\usepackage[capitalise,nameinlink]{cleveref}
\usepackage{natbib}
\usepackage{tikz}
\usetikzlibrary{positioning,arrows.meta,calc,fit,backgrounds,shapes.geometric}
\usepackage{pgfplots}
\pgfplotsset{compat=1.18}
\usepgfplotslibrary{groupplots}
\definecolor{cBlue}{HTML}{2A78D6}
\definecolor{cOrange}{HTML}{EB6834}
\definecolor{cAqua}{HTML}{1BAF7A}
\definecolor{cInk}{HTML}{52514E}
\definecolor{cGrid}{HTML}{D9D8D3}
\definecolor{cAdm}{HTML}{E6F2EC}
\pgfplotsset{paper/.style={
  axis lines=left, axis line style={cInk}, tick style={cInk}, tick label style={font=\footnotesize,color=cInk},
  label style={font=\small}, grid=major, grid style={cGrid,very thin}, legend style={font=\footnotesize,draw=none,fill=none},
  legend cell align=left}}

\theoremstyle{plain}
\newtheorem{theorem}{Theorem}[section]
\newtheorem{proposition}[theorem]{Proposition}
\newtheorem{corollary}[theorem]{Corollary}

\theoremstyle{definition}
\newtheorem{definition}[theorem]{Definition}
\newtheorem{problem}[theorem]{Problem}
\newtheorem{example}[theorem]{Example}
\theoremstyle{remark}
\newtheorem{remark}[theorem]{Remark}

\newcommand{\D}{D}
\newcommand{\Qstar}{Q^{*}}
\newcommand{\QJ}{Q^{J}}
\newcommand{\sstar}{s^{*}}
\newcommand{\shat}{\hat{s}}
\newcommand{\Pol}{\Pi}
\newcommand{\Adm}{\mathcal{A}}
\newcommand{\PolClass}{\mathcal{C}}
\newcommand{\Dist}{\mathcal{D}}
\newcommand{\E}{\mathbb{E}}
\newcommand{\Rec}{\mathrm{Rec}}
\newcommand{\Prec}{\mathrm{Prec}}
\newcommand{\TP}{\mathrm{TP}}
\newcommand{\FP}{\mathrm{FP}}
\newcommand{\FN}{\mathrm{FN}}
\newcommand{\Cost}{C}
\newcommand{\oeq}{\equiv^{*}}
\newcommand{\semsel}{\sigma_{s}}
\newcommand{\semjoin}{\bowtie_{s}}
\newcommand{\act}[1]{\textup{\textsc{#1}}}

\title{When Plans Change Answers:\\Formalizing Cost--Accuracy Optimization for Semantic Queries}
\author{Kyoungmin Kim\\KAIST}
\date{}

\begin{document}
\maketitle

\begin{abstract}
In semantic query engines, predicates are evaluated by machine-learned models, and the choice of a query plan affects not only the cost of a query but also its result. Existing systems either apply a fixed threshold to each semantic operator or tune accuracy per operator, without accounting for how errors propagate through joins. We give a formal problem definition for cost--accuracy optimization of such queries. Our starting point is the calibrated confidence that decision models such as Jev attach to each decision. It yields an expected error for every decision; weighting these errors by each decision's contribution to the output (in the simplest case, its fan-out) gives the expected output quality of a plan without any labeled data, and the same computation in reverse turns an output-level accuracy target into a price on each base or intermediate tuple. Building on this, we define an oracle semantics for relational algebra with semantic operators, physical plans as pairs of a logical plan and a decision policy, declarative output-level targets, and a hierarchy of plan equivalence. We show that accuracy is plan-invariant under pointwise-deterministic policies, and that selection pushdown is not quality-sound when escalation bands are calibrated on the plan's own candidates. Expected quality can be computed in polynomial time under bag semantics; under set semantics it follows the dichotomy of tuple-independent probabilistic databases when every relation carries a semantic predicate. Choosing which tuples to drop is NP-hard, while the optimization problem decomposes into per-tuple decisions through two Lagrange multipliers, and, with what we call confidence-centric skipping, tuples that can no longer affect the target are skipped without being scored. Simulations on a synthetic workload illustrate these effects; an evaluation on real engines is left for future work.
\end{abstract}

\section{Introduction}\label{sec:intro}

A classical query optimizer can treat all plans in its search space as interchangeable with respect to the result and compare them by cost alone. Semantic query engines break this assumption. Once a predicate is evaluated by a model, thresholds, cascades, candidate pruning, and sample-based calibration all affect which tuples are returned, and two plans for the same query may return different answers. In this paper we ask three questions. When should two such plans be considered equivalent? How can we predict the accuracy of a plan without comparing its output to ground truth? And how do we find the cheapest plan whose accuracy is good enough?

\paragraph{Running example.}
Consider a retailer that wants to contact every customer who bought a product for which some review reports a battery fire:
\begin{verbatim}
  SELECT o.*  FROM Reviews r JOIN Orders o ON r.product = o.product
  WHERE  SEM(r.text, 'reports a battery fire')
\end{verbatim}
We assume that the semantic predicate is evaluated by an inexpensive decision model, such as Jev in JEVDB~\citep{jevdb2026}, which returns a confidence $p_r$ that review $r$ satisfies the predicate. Uncertain reviews can additionally be sent to a more expensive LLM judge. A review that passes the filter contributes one output row per order of its product; we call this number its \emph{fan-out} $f_r$. \Cref{tab:toy} lists five reviews.

\begin{table}[t]
\centering
\caption{Running example: five reviews, their decision-model confidence $p$, and fan-out $f$ (orders of the reviewed product). A plan that accepts the reviews with $p\ge0.5$ makes the expected errors on the right: per review (operator level) and per output row (output level).}\label{tab:toy}
\small
\begin{tabular}{@{}lrrcll@{}}
\toprule
Review & $p$ & $f$ & Action & Expected error per review & Expected error in output rows \\
\midrule
$r_1$ & 0.98 & 2   & \act{Accept} & FP: $1-p=0.02$ & FP: $0.02\cdot2=0.04$ \\
$r_2$ & 0.90 & 40  & \act{Accept} & FP: $0.10$ & FP: $0.10\cdot40=4.00$ \\
$r_3$ & 0.60 & 5   & \act{Accept} & FP: $0.40$ & FP: $0.40\cdot5=2.00$ \\
$r_4$ & 0.30 & 100 & \act{Reject} & FN: $p=0.30$ & FN: $0.30\cdot100=\mathbf{30.00}$ \\
$r_5$ & 0.05 & 3   & \act{Reject} & FN: $0.05$ & FN: $0.05\cdot3=0.15$ \\
\midrule
\multicolumn{4}{@{}l}{Expected recall\,/\,precision} & $0.88$\,/\,$0.83$ & $\mathbf{0.58}$\,/\,$0.87$ \\
\bottomrule
\end{tabular}
\end{table}

\paragraph{From confidence to expected error.}
Suppose the confidences are calibrated. Then a review accepted with confidence $p$ is a false positive with probability $1-p$, and a rejected review is a false negative with probability $p$. Let $\mathrm{conf}_i$ denote the confidence in the action that was taken, i.e., $p_i$ if the review is accepted and $1-p_i$ if it is rejected. By linearity of expectation,
\[
\E[\#\text{wrong decisions}]=\sum_i\big(1-\mathrm{conf}_i\big),
\]
where accepted instances account for the false positives and rejected instances for the false negatives. For the plan in \cref{tab:toy}, the filter makes $0.52$ false positives and $0.35$ false negatives in expectation, for an expected recall of $2.48/(2.48+0.35)=0.88$. The query result, however, consists of orders and not of reviews. An error on $r_i$ affects $f_i$ output rows, so the expected number of wrong output rows is the same sum weighted by each decision's contribution to the output, here $\sum_i f_i(1-\mathrm{conf}_i)$. On output rows the expected recall drops to $0.58$, and $30$ of the $30.15$ expected missing rows are due to a single review, $r_4$, which is both uncertain and has a large fan-out.

\paragraph{From expected error back to decisions.}
The same reasoning can be applied in the opposite direction to make decisions. Escalating $r_4$ to the judge costs one LLM call and removes $30$ expected missing rows, which raises the expected output recall from $0.58$ to $0.998$. Escalating $r_3$ also costs one call, but removes only $2$ expected wrong rows. In general, an output-level target can be \emph{pushed down} to a price per unit of expected output error. Each base or intermediate tuple then compares the cost of resolving its uncertainty with the reduction in output error that this buys. \Cref{fig:chain} illustrates both directions.

\begin{figure}[t]
\centering
\begin{tikzpicture}[
  box/.style={draw,rounded corners,align=center,inner sep=4pt,font=\small,minimum height=1.25cm},
  arr/.style={-{Stealth[length=2.2mm]},thick},
  lab/.style={font=\footnotesize,align=center}]
\node[box] (c) {calibrated\\confidence $p_i$\\per decision};
\node[box,right=1.3cm of c] (e) {expected decision error\\$1-\mathrm{conf}_i$\\(FP if accepted, FN if rejected)};
\node[box,right=1.3cm of e] (o) {expected output error\\$\sum_i c_i\,(1-\mathrm{conf}_i)$\\$\Rightarrow\ \E[\Rec],\ \E[\Prec]$};
\draw[arr] (c) -- (e);
\draw[arr] (e) -- node[lab,above]{$\times\,c_i$} node[lab,below]{lineage} (o);
\draw[arr,dashed] (o.south) to[out=-150,in=-30] node[lab,below]{target $(\rho,\gamma)$ pushed down as a price per unit of output error,\\weighed per tuple against the cost of an escalation, a drop, or a reorder} (c.south);
\end{tikzpicture}
\caption{The role of calibrated confidence. Forward (solid): each decision has an expected error, which lineage weights by the decision's contribution $c_i$ to the output (in the running example, its fan-out); the sum gives the expected output quality of a plan. Backward (dashed): an output-level target induces a cost--accuracy trade-off for each tuple.}\label{fig:chain}
\end{figure}

\paragraph{Why calibrated confidence matters.}
Without confidences, the accuracy of a plan can only be measured against labeled ground truth, and an optimizer has nothing but cost to compare plans by. Boolean decisions from a fixed model and threshold do not help either, since then all classically equivalent plans return the same answer (\cref{prop:invariance}). Calibrated confidences allow us to predict the output quality of a plan before executing it (\cref{sec:quality,sec:estimation}), to compare plans by their predicted quality and thereby decide which rewrites are safe (\cref{sec:equivalence}), and to optimize by assigning a price to output error at the level of individual tuples (\cref{sec:propagation,sec:optimization}).

\paragraph{Motivating gaps.}
JEVDB~\citep{jevdb2026} evaluates predicates with Jev either at a fixed threshold (its Flash mode) or with a per-predicate escalation band $[\ell,h)$. The band is fit on a sample of about $200$ candidates labeled by an LLM judge, such that the expected recall and precision loss relative to the judge stay below user-given tolerances ($0.05$ in the experiments). JEVDB pushes semantic filters down by default, arguing that Jev calls are cheap, and leaves semantic-aware query optimization to future work. Like other current systems, it states accuracy targets per operator.

Per-operator targets can differ substantially from the quality of the query result. In \cref{tab:toy}, the filter has an expected recall of $0.88$, whereas the query result has an expected recall of $0.58$. To our knowledge, no current system lets the user specify a single accuracy target for the query and distributes it across operators and tuples.

\paragraph{Contributions.}
\begin{enumerate}[leftmargin=*]
  \item A formal model: oracle semantics, physical plans as (logical plan, decision policy), and output-level quality targets, with a plan's expected output quality derived from calibrated confidences (\cref{sec:model,sec:plans,sec:quality}).
  \item A hierarchy of semantic plan equivalence: a plan-invariance result for pointwise-deterministic implementations, a confidence-relative equivalence, and quality-sound rewrites with explicit counterexamples (\cref{sec:equivalence}).
  \item A lineage-based account of how output error propagates to intermediate and base tuples, and how an output target is pushed down to a per-tuple price (\cref{sec:propagation}).
  \item Estimators of plan quality with judge-relative and oracle-relative guarantees, and a dichotomy for exact quality evaluation (\cref{sec:estimation}).
  \item The cost--accuracy optimization problem, its per-tuple Lagrangian decomposition, confidence-centric skipping of instances that can no longer affect the target, without scoring them, and complexity results (\cref{sec:optimization}).
  \item Simulations on a synthetic workload with stated parameters that illustrate plan invariance, plan dependence under calibrated bands, and the accuracy of confidence-based quality prediction. In this workload, plans for the same query differ by three orders of magnitude in cost and cover the whole range of output quality, and per-tuple pricing meets an output target at several times lower cost than an operator-level band (\cref{sec:experiments}, \cref{fig:planspace,fig:sweep}). The simulations illustrate the mechanisms; they are not measurements of a deployed system.
\end{enumerate}

\section{Data Model and Semantic Relational Algebra}\label{sec:model}

We define the correct answer of a query by evaluating it with oracle semantic predicates. Errors can then only come from the evaluation of semantic predicates, never from relational operators.

\begin{definition}[Semantic operators]\label{def:semop}
A \emph{semantic predicate} is a function $s : \mathrm{dom}(x_1)\times\cdots\times\mathrm{dom}(x_k)\to\{0,1\}$ over text-valued (or multimodal) attributes, specified by a natural-language instruction $\phi$. A \emph{semantic labeling function} $c:\mathrm{dom}(x)\to C$ maps to a finite label set $C$; a \emph{semantic score} maps to an ordered set $L$. We write $\sstar$ for the oracle (intended) interpretation of $\phi$.
\end{definition}

\begin{definition}[Semantic relational algebra, SRA]\label{def:sra}
SRA extends positive relational algebra $(\sigma,\pi,\bowtie,\cup)$ with semantic selection $\semsel$, semantic join $\semjoin$, semantic projection $\pi_c$ (classification), and semantic ranking $\tau_k$. Aggregation and negation are deferred (\cref{sec:open}).
\end{definition}

\begin{definition}[Oracle semantics]\label{def:oracle}
For a query $Q$ in SRA and database $\D$, $\Qstar(\D)$ denotes the result of evaluating $Q$ with every semantic operator interpreted by its oracle. All plans are evaluated against $\Qstar(\D)$.
\end{definition}

\begin{example}[Oracle semantics in the running example]\label{ex:oracle}
The query of \cref{sec:intro} is
\[
Q=\semsel(\textit{Reviews})\bowtie_{\textit{product}}\textit{Orders},\qquad \phi=\text{``reports a battery fire''}.
\]
Suppose that, among the reviews in \cref{tab:toy}, exactly $r_1$, $r_2$, and $r_4$ satisfy $\phi$. Then $\Qstar(\D)$ is the bag of the $2+40+100=142$ orders of those products. A plan that accepts $r_1,r_2,r_3$ returns $2+40+5=47$ orders, of which $42$ are in $\Qstar(\D)$. The confidences $p_i$ in \cref{tab:toy} express the model's belief about these unknown labels, and \cref{sec:quality} uses them to compute expected quality.
\end{example}

\begin{remark}\label{rem:classical}
Because every $\sstar$ is a deterministic function of its inputs, every classical equivalence of positive relational algebra remains valid under oracle semantics (selection pushdown, join commutativity and associativity, semijoin reduction). Hence, the logical search space is the same as in the classical setting, and approximation only enters at the physical level.
\end{remark}

\paragraph{Bag or set outputs.}
We use bag semantics by default because it makes fan-out explicit; in \cref{ex:oracle}, every order is a separate answer. Set semantics (e.g., \texttt{SELECT DISTINCT o.customer}) is a variant in which an answer survives if any of its derivations survives (\cref{sec:propagation,sec:estimation}).

\section{Physical Plans and Decision Policies}\label{sec:plans}

We model a physical plan as a logical plan together with a decision policy, which determines for every semantic decision whether it is accepted, rejected, escalated, or skipped.

\begin{definition}[Decision instance]\label{def:instance}
For a semantic operator $s$ in plan $L$ over database $\D$, a \emph{decision instance} is a pair $(s,\kappa)$ where $\kappa$ is the tuple of attribute values $s$ reads (for a join, the pair of texts). Let $I_L(\D)$ be the multiset of decision instances that $L$ presents to semantic operators.
\end{definition}

\begin{example}[Instances depend on the plan]\label{ex:instances}
In the running example, the pushdown plan $L_{\downarrow}=\semsel(\textit{Reviews})\bowtie\textit{Orders}$ presents five instances, one per review. The pull-up plan $L_{\uparrow}=\semsel(\textit{Reviews}\bowtie\textit{Orders})$ evaluates the predicate on each joined row, so it presents $2+40+5+100+3=150$ instances; review $r_4$'s text appears $100$ times. The two logical plans are equivalent, but $I_{L_\downarrow}(\D)$ and $I_{L_\uparrow}(\D)$ are different multisets. This difference becomes relevant for policies that depend on the population of instances (\cref{tab:policy-classes}).
\end{example}

\begin{definition}[Decision model]\label{def:model}
A decision model $M$ returns, for an instance $(s,\kappa)$, a score $p_M(s,\kappa)\in[0,1]$. $M$ is \emph{calibrated} w.r.t.\ a distribution $\Dist$ over instances if $\Pr_{\kappa\sim\Dist}[\sstar(\kappa)=1 \mid p_M=p]=p$. Examples include Jev's \texttt{Noul} primitive, cross-encoders, and LLM judges that report verbalized or logit-based confidences.
\end{definition}

Intuitively, calibration means that about $30\%$ of all reviews with $p=0.3$ actually report a fire. Rejecting $r_4$ in \cref{tab:toy} is therefore wrong with probability $0.3$.

\begin{definition}[Decision policy]\label{def:policy}
A policy $\Pol$ maps each decision instance, together with a context, to one of four actions:
\begin{itemize}[leftmargin=*,nosep]
  \item \act{Accept}: treat the instance as satisfying $s$ (keep the tuple);
  \item \act{Reject}: treat it as not satisfying $s$ (drop the tuple);
  \item \act{Escalate}$(M_k)$: ask a more expensive model $M_k$ (e.g., the LLM judge) and follow its answer;
  \item \act{Skip}: drop the tuple without evaluating it, e.g., after a negative probe of a semantic Bloom filter, or as a cost-motivated drop.
\end{itemize}
The context may include the instance's score, its downstream fan-out, its position in the plan, and statistics observed so far.
\end{definition}

\begin{example}[Policies in the running example]\label{ex:policy}
The policy of \cref{tab:toy} is ``\act{Accept} if $p\ge0.5$, else \act{Reject}''. A JEVDB-style band policy with band $[0.2,0.8)$ accepts $r_1,r_2$, rejects $r_5$, and escalates $r_3$ and $r_4$ to the LLM. A context-aware policy might escalate only $r_4$, whose large fan-out makes an error costly, and accept $r_3$ even though $r_3$ is more likely to be wrong.
\end{example}

\begin{definition}[Physical plan]\label{def:plan}
$P=(L,\Pol)$. Its output $P(\D)$ is a random variable when $\Pol$ uses sampling (e.g., threshold calibration) or a nondeterministic model.
\end{definition}

\Cref{tab:policy-classes} lists the policy classes used to state results.

\begin{table}[t]
\centering
\caption{Policy classes.}\label{tab:policy-classes}
\begin{tabularx}{\linewidth}{@{}lYY@{}}
\toprule
Class & Action depends on & Example \\
\midrule
Pointwise-deterministic (PD) & $\kappa$ only, fixed function & Jev with fixed $\tau$; JEVDB-Flash; ``accept iff $p\ge0.5$'' \\
Population-calibrated (PC) & $\kappa$ and a statistic of $I_L(\D)$ & escalation band $[\ell,h)$ fit on a sample of current candidates; JEVDB's quality-controlled mode \\
Context-aware (CA) & $\kappa$, fan-out, downstream cost/selectivity & escalate $r_4$ because $f_4=100$ \\
Sound pruning & necessary condition, sound extractor & lossless SBF; relational Bloom semijoin \\
\bottomrule
\end{tabularx}
\end{table}

\begin{definition}[Cost]\label{def:cost}
$\Cost(P,\D)=\sum_{\text{model invocations}}\mathrm{price}(M)\cdot\mathrm{tokens}+\Cost_{\mathrm{rel}}(P,\D)$. Relational operators do not introduce errors, but they are not free. Each accepted review adds its $f$ orders to the join and to any downstream work, such as composing one notification per order. Latency can be used instead of, or in addition to, monetary cost.
\end{definition}

\begin{example}[Cost in the running example]\label{ex:cost}
We charge one unit per decision-model call, $50$ per LLM call, and $0.05$ per joined row (the values used in \cref{sec:experiments}). With the band policy of \cref{ex:policy}, the pushdown plan costs $5$ model calls, $2$ LLM calls, and the rows of the reviews it returns. The pull-up plan with the same band makes $150$ model calls and escalates every one of the $105$ rows of $r_3$ and $r_4$ unless decisions are cached by $\kappa$. This amounts to $150+105\cdot50=5{,}400$ units of model cost instead of $5+2\cdot50=105$, for the same result.
\end{example}

\section{Quality Semantics and Declarative Targets}\label{sec:quality}

We measure quality on the query result, relative to $\Qstar(\D)$. Accuracy of individual operators is an internal quantity that the optimizer may budget, but it is not what the user is promised. Calibrated confidences allow us to predict output quality without ground truth.

\begin{definition}[Output quality]\label{def:quality}
For bag outputs $O=P(\D)$ and $O^*=\Qstar(\D)$, with multiplicities and multiset intersection $\cap$,
\[
\Rec(O,O^*)=\frac{|O\cap O^*|}{|O^*|},\qquad \Prec(O,O^*)=\frac{|O\cap O^*|}{|O|}.
\]
Set versions apply after duplicate elimination. Ranking (top-$k$) and classification outputs use rank or label metrics; aggregates use relative error (deferred).
\end{definition}

In \cref{ex:oracle}, the plan returns $47$ orders, $42$ of which are correct, while there are $142$ correct orders in total. Hence $\Rec=42/142=0.30$ and $\Prec=42/47=0.89$. These values refer to one particular set of true labels; the confidences describe a distribution over such label sets.

\paragraph{Expected quality from calibrated confidences.}
Fix a plan $P$ and assume that every semantic decision $i$ comes with a calibrated confidence $p_i$. We define the confidence in the action taken as
\[
\mathrm{conf}_i=\begin{cases}p_i & \text{if } i \text{ is accepted},\\ 1-p_i & \text{if } i \text{ is rejected},\\ \alpha_J & \text{if } i \text{ is escalated to a judge of accuracy } \alpha_J,\end{cases}
\]
so that decision $i$ is wrong with probability $1-\mathrm{conf}_i$. Let $c_i$ denote the number of output tuples whose derivation uses decision $i$. In the running example, $c_i$ is the fan-out; \cref{def:contribution} gives the general definition.

\begin{definition}[Predicted quality]\label{def:predicted}
For a plan with one semantic operator under bag semantics, a perfect judge ($\alpha_J=1$), and accepted, rejected, and escalated instance sets $A,R,E$,
\[
\E[\TP]=\sum_{i\in A\cup E}c_i\,p_i,\qquad \E[\FP]=\sum_{i\in A}c_i\,(1-p_i),\qquad \E[\FN]=\sum_{i\in R}c_i\,p_i,
\]
so that $\E[\FP]+\E[\FN]=\sum_i c_i(1-\mathrm{conf}_i)$. The \emph{predicted recall and precision} are
\[
\widehat{\Rec}_M(P)=\frac{\E[\TP]}{\E[\TP]+\E[\FN]},\qquad \widehat{\Prec}_M(P)=\frac{\E[\TP]}{\E[\TP]+\E[\FP]} .
\]
Setting every $c_i=1$ gives the operator-level quantities. \Cref{sec:propagation} extends $c_i$ to several semantic operators and intermediate tuples, and \cref{sec:estimation} handles imperfect judges, set semantics, and miscalibration.
\end{definition}

The ratio of expectations is an approximation of $\E[\Rec]$ that becomes accurate when many decisions contribute to the output. We quantify the gap empirically in \cref{sec:experiments}.

\begin{example}[Operator-level vs.\ output-level prediction]\label{ex:predicted}
For the policy of \cref{tab:toy} and $c_i=1$, we obtain $\E[\TP]=0.98+0.90+0.60=2.48$, $\E[\FP]=0.52$, $\E[\FN]=0.30+0.05=0.35$, and hence $\widehat{\Rec}=0.88$ and $\widehat{\Prec}=0.83$. With $c_i=f_i$, we obtain $\E[\TP]=1.96+36+3=40.96$, $\E[\FP]=6.04$, $\E[\FN]=30+0.15=30.15$, and hence $\widehat{\Rec}=0.58$ and $\widehat{\Prec}=0.87$. If $r_4$ is escalated, it moves from $R$ to $E$; then $\E[\TP]=70.96$, $\E[\FN]=0.15$, $\widehat{\Rec}=0.998$, and $\widehat{\Prec}=0.92$. None of these numbers requires labeled data, although each requires a confidence for every review; \cref{sec:skip} shows when some reviews need not be scored at all. \Cref{fig:errors} breaks the errors down by review.
\end{example}

\begin{figure}[t]
\centering
\begin{tikzpicture}
\begin{groupplot}[
  group style={group size=2 by 1, horizontal sep=1.6cm, yticklabels at=edge left},
  paper, xbar, /pgf/bar width=8pt, /pgf/bar shift=0pt, width=0.46\linewidth, height=4.6cm,
  symbolic y coords={r5,r4,r3,r2,r1}, ytick={r5,r4,r3,r2,r1},
  yticklabels={$r_5$ ($p{=}0.05$),$r_4$ ($p{=}0.30$),$r_3$ ($p{=}0.60$),$r_2$ ($p{=}0.90$),$r_1$ ($p{=}0.98$)},
  xmin=0, enlarge y limits=0.15, ymajorgrids=false,
  nodes near coords, nodes near coords style={font=\scriptsize,color=black,anchor=west},
  every axis plot/.append style={draw=none}]
\nextgroupplot[xmax=0.55, title={\small per review: $1-\mathrm{conf}_i$},
  xlabel={expected wrong decisions}, nodes near coords={\pgfmathprintnumber[fixed,zerofill,precision=2]{\pgfplotspointmeta}}]
  \addplot[fill=cBlue] coordinates {(0.02,r1) (0.10,r2) (0.40,r3)};
  \addplot[fill=cOrange] coordinates {(0.30,r4) (0.05,r5)};
\nextgroupplot[xmax=40, title={\small per output row: $c_i\,(1-\mathrm{conf}_i)$, $c_i=f_i$},
  xlabel={expected wrong output rows}, nodes near coords={\pgfmathprintnumber[fixed,precision=2]{\pgfplotspointmeta}},
  legend style={at={(0.99,0.99)},anchor=north east,fill=white}]
  \addplot[fill=cBlue] coordinates {(0.04,r1) (4.0,r2) (2.0,r3)};
  \addplot[fill=cOrange] coordinates {(30.0,r4) (0.15,r5)};
  \legend{false positive (accepted), false negative (rejected)}
\end{groupplot}
\end{tikzpicture}
\caption{Expected errors of the plan in \cref{tab:toy}. Left: per review, $r_3$ (accepted with confidence $0.6$) has the highest error probability. Right: weighted by contribution (fan-out), $r_4$ dominates; its error probability of $0.3$ applies to $100$ orders and accounts for $30$ of the $30.15$ expected missing output rows. An operator-level target corresponds to the left panel, while output quality corresponds to the right panel.}\label{fig:errors}
\end{figure}

\begin{definition}[Declarative target]\label{def:target}
A target is $T=(\rho,\gamma,\delta)$. Plan $P$ satisfies $T$ on $\D$ if
\[
\Pr_{\Pol}\!\big[\Rec(P(\D),\Qstar(\D))\ge\rho \;\wedge\; \Prec(P(\D),\Qstar(\D))\ge\gamma\big]\ge 1-\delta .
\]
The \emph{expected} variant replaces the probabilistic constraint with $\E[\Rec]\ge\rho$ and $\E[\Prec]\ge\gamma$; the \emph{predicted} variant uses $\widehat{\Rec}_M\ge\rho$ and $\widehat{\Prec}_M\ge\gamma$. Only the predicted variant can be checked by an optimizer.
\end{definition}

For example, the plan in \cref{tab:toy} violates the predicted target $T=(0.9,0.9,\cdot)$ since $\widehat{\Rec}=0.58$, whereas the same plan with $r_4$ escalated satisfies it ($0.998$ and $0.92$).

\begin{definition}[Reference-relative quality]\label{def:judge}
In practice, $\Qstar(\D)$ cannot be observed. For a reference judge $J$ (e.g., a frontier LLM), let $\QJ(\D)$ be $Q$ evaluated with $J$ as the oracle. A guarantee is \emph{judge-relative} if stated against $\QJ$ and \emph{oracle-relative} if stated against $\Qstar$.
\end{definition}

Whether a judge-relative guarantee carries over to the oracle depends on which instances the judge gets wrong, weighted by their contribution. Let $\Delta$ be the set of decision instances on which $J$ and $\sstar$ disagree, let $c^*(i)$ be the number of tuples of $\Qstar(\D)$ whose derivation uses $i$, and $c^J(i)$ the same count for $\QJ(\D)$. Define the \emph{contribution-weighted judge error}
\[
\varepsilon_J=\frac{\sum_{i\in\Delta}c^*(i)+\sum_{i\in\Delta}c^J(i)}{|\Qstar(\D)|}.
\]

\begin{proposition}[Judge-to-oracle transfer]\label{prop:judge}
Under bag semantics, for any plan output $O$,
\begin{align*}
\Rec(O,\QJ(\D))\ge\rho\;&\Rightarrow\;\Rec(O,\Qstar(\D))\ge\rho-\varepsilon_J,\\
\Prec(O,\QJ(\D))\ge\gamma\;&\Rightarrow\;\Prec(O,\Qstar(\D))\ge\gamma-\frac{\sum_{i\in\Delta}c^J(i)}{|O|}.
\end{align*}
\end{proposition}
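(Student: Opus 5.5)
The plan is to compare the two reference outputs $O^*=\Qstar(\D)$ and $O^J=\QJ(\D)$ through their bag symmetric difference. Then we transfer the recall and precision bounds using elementary multiset inequalities.

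\textbf{Step 1.} We bound $|O^*\setminus O^J|$ and $|O^J\setminus O^*|$. A tuple of $O^*$ (counted with multiplicity, i.e.\ one derivation) is missing from $O^J$ only if its oracle derivation uses some decision $i$ on which $J$ and $\sstar$ disagree, i.e.\ some $i\in\Delta$. The reason is that relational operators are deterministic (\cref{rem:classical}): if every semantic decision in the derivation agrees, then $J$ produces the same derivation. Charging each such tuple to one $i\in\Delta$ that its derivation uses gives
\[
|O^*\setminus O^J|\le\sum_{i\in\Delta}c^*(i).
\]
Symmetrically, a tuple of $O^J$ not in $O^*$ uses some disagreeing decision in its judge derivation, so
\[
|O^J\setminus O^*|\le\sum_{i\in\Delta}c^J(i).
\]
Multiplicities are handled by counting derivations, and bag semantics makes this count equal to the multiplicity.

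\textbf{Step 2 (recall).} We use
\[
|O\cap O^*|\ge|O\cap O^J|-|O^J\setminus O^*|,
\]
which is a pointwise multiplicity inequality $\min(a,b)\ge\min(a,c)-\max(c-b,0)$. This gives
\[
|O\cap O^*|\ge\rho|O^J|-\sum_{i\in\Delta} c^J(i).
\]
Next we use $|O^J|\ge|O^*|-|O^*\setminus O^J|\ge |O^*|-\sum_{i\in\Delta} c^*(i)$. Since $\rho\le1$, we get $\rho|O^J|\ge\rho|O^*|-\sum_{i\in\Delta} c^*(i)$. Combining the two and dividing by $|O^*|$ yields $\Rec(O,O^*)\ge\rho-\varepsilon_J$. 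This accounts for the sum of both contribution terms in the numerator of $\varepsilon_J$.

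\textbf{Step 3 (precision).} The denominator $|O|$ is the same under both references, so
\[
\Prec(O,O^*)\ge \frac{|O\cap O^J|-|O^J\setminus O^*|}{|O|}\ge\gamma-\frac{\sum_{i\in\Delta}c^J(i)}{|O|}.
\]

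\textbf{Main obstacle.} The delicate step is Step 1: making the charging argument rigorous with several semantic operators, where a derivation may use multiple disagreeing decisions. Overcounting is harmless here because we only need an upper bound. The real subtlety is that $c^*(i)$ must count tuples of $O^*$ whose oracle derivation uses $i$, and $c^J(i)$ must count tuples of $O^J$ whose judge derivation uses $i$; instances that appear only in one of the two evaluations are covered by the corresponding count. I would formalize this via lineage, treating each output tuple as a derivation tree of base tuples and decisions, and argue by induction on the SRA expression that equal decisions along a tree imply equal outputs.
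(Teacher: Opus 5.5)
Your proposal is correct and follows the paper's proof. It bounds the two bag differences $|O^*\setminus O^J|$ and $|O^J\setminus O^*|$ by the contribution sums over $\Delta$. It then combines $|O\cap O^*|\ge|O\cap O^J|-|O^J\setminus O^*|$ with $|O^J|\ge|O^*|-|O^*\setminus O^J|$ and $\rho\le1$ for recall, and uses the same inequality divided by $|O|$ for precision. Your pointwise multiplicity check and your derivation-counting argument for bag semantics spell out details the paper leaves implicit, but the route is the same.
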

\begin{proof}
Write $O^*=\Qstar(\D)$ and $O^J=\QJ(\D)$. An output tuple whose derivation uses no instance of $\Delta$ is derived identically under $J$ and $\sstar$, so $|O^*\setminus O^J|\le\sum_{i\in\Delta}c^*(i)$ and $|O^J\setminus O^*|\le\sum_{i\in\Delta}c^J(i)$ (bag difference). For multisets, $|O\cap O^*|\ge|O\cap O^J|-|O^J\setminus O^*|$, and $|O\cap O^J|\ge\rho\,|O^J|\ge\rho\,(|O^*|-|O^*\setminus O^J|)$. Dividing by $|O^*|$ and using $\rho\le1$ gives $\Rec(O,O^*)\ge\rho-\varepsilon_J$. For precision, divide $|O\cap O^*|\ge\gamma|O|-|O^J\setminus O^*|$ by $|O|$.
\end{proof}

\begin{example}[Fan-out makes unweighted judge error meaningless]\label{ex:judge}
In \cref{ex:oracle}, suppose the judge disagrees with the oracle only on $r_4$, wrongly answering ``no''. This is one error among five reviews, i.e., $20\%$ without weighting. However, $c^*(r_4)=100$ of the $142$ true orders and $c^J(r_4)=0$, so $\varepsilon_J=100/142=0.70$, and a judge-relative recall of $0.95$ only guarantees an oracle recall of $0.25$. The bound is almost tight: the plan that returns exactly $\QJ(\D)$ has judge-relative recall $1$ and oracle recall $42/142=0.30$. Judge accuracy should therefore be measured with contribution weights, in particular on instances with high contribution (\cref{sec:estimation}).
\end{example}

\begin{remark}
Current systems (e.g., JEVDB's $\varepsilon_r=\varepsilon_p=0.05$ relative to an LLM judge) provide judge-relative, operator-level, expected guarantees. The targets defined here are stated at the output level, either relative to the oracle or to a judge, and optionally with high probability.
\end{remark}

\section{Semantic Plan Equivalence}\label{sec:equivalence}

We define plan equivalence at four levels: oracle equivalence, implementation equivalence, confidence-relative equivalence, and admissibility. Approximate notions are defined relative to $\Qstar$ rather than between pairs of plans, since pairwise $\varepsilon$-closeness is not transitive.

\begin{definition}[Oracle equivalence]\label{def:oeq}
Logical plans $L_1\oeq L_2$ iff $L_1^*(\D)=L_2^*(\D)$ for all $\D$. By \cref{rem:classical} this coincides with classical equivalence of positive RA.
\end{definition}

\begin{definition}[Implementation equivalence]\label{def:impl-eq}
For a fixed PD policy $\Pol$, let $\shat$ denote the implemented predicate. $P_1=(L_1,\Pol)$ and $P_2=(L_2,\Pol)$ are implementation-equivalent iff $L_1[\shat](\D)=L_2[\shat](\D)$ for all $\D$.
\end{definition}

\begin{proposition}[Plan invariance under PD policies]\label{prop:invariance}
If $\Pol$ is pointwise-deterministic, then $L_1\oeq L_2$ implies that $P_1$ and $P_2$ are implementation-equivalent, and hence have identical output quality on every $\D$.
\end{proposition}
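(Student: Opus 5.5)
The plan is to show that a pointwise-deterministic policy turns each semantic operator into an ordinary deterministic predicate. Once that holds, the argument behind \cref{rem:classical} applies with $\shat$ in place of $\sstar$.

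\emph{Step 1: the policy induces a predicate.} Fix a PD policy $\Pol$. By \cref{tab:policy-classes}, its action on an instance $(s,\kappa)$ depends only on $\kappa$ through a fixed function. I would define $\shat(\kappa)=1$ if the action is \act{Accept}, or \act{Escalate}$(M_k)$ with $M_k$ answering yes on $\kappa$, and $\shat(\kappa)=0$ otherwise. This uses the standing assumption that an escalated judge is itself deterministic per $\kappa$ (otherwise the output is a random variable, as in \cref{def:plan}). The key observation is that $\shat$ does not depend on $I_L(\D)$, on the position of the instance in the plan, or on fan-out. So every occurrence of $\kappa$, whether under $L_1$ or $L_2$ and however many times it is presented (as in \cref{ex:instances}), receives the same decision. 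Hence $P_i(\D)=L_i[\shat](\D)$ for $i=1,2$.

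\emph{Step 2: transfer the equivalence.} $L_1\oeq L_2$ means $L_1[\sstar](\D)=L_2[\sstar](\D)$ for all $\D$, and by \cref{def:oeq} this coincides with classical equivalence of positive RA. I would argue that this equivalence is uniform in the interpretation of the semantic predicates. Classical rewrites such as pushdown, commutativity, associativity, and semijoin reduction hold for every deterministic Boolean function of tuple attributes. More formally, I would treat each semantic predicate symbol as an uninterpreted relation or predicate: two positive RA expressions equivalent for all databases remain equivalent for all interpretations of these symbols. Substituting $\shat$ then gives $L_1[\shat](\D)=L_2[\shat](\D)$, which is implementation equivalence.

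\emph{Step 3: quality.} The outputs $P_1(\D)=P_2(\D)$ are identical bags, and the reference is the same $\Qstar(\D)$. So $\Rec$ and $\Prec$ from \cref{def:quality} coincide, and by the same reasoning so do the set versions.

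The main obstacle is Step 2. The definition of $\oeq$ quantifies over databases with the fixed oracle $\sstar$, not over all interpretations. I would close this gap by treating the semantic predicate as part of the database instance, i.e. as an extra relation $S_s=\{\kappa:\sstar(\kappa)=1\}$ joined in. Classical equivalence for all instances, including those where $S_s$ is replaced by $\{\kappa:\shat(\kappa)=1\}$, then gives the claim. Whether this is legitimate depends on reading $\oeq$ as syntactic or classical equivalence, which \cref{def:oeq} asserts.
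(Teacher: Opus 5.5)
Your proposal is correct and follows the paper's proof. A PD policy, with an escalation model that is a fixed function of $\kappa$, induces a deterministic predicate $\shat$, each $P_i$ computes $L_i[\shat](\D)$, and classical rewrites stay valid under any deterministic interpretation, so the outputs and hence the qualities coincide. The gap you flag in Step~2 is real, and the paper closes it the same way you do: it relies on \cref{def:oeq} identifying $\oeq$ with classical equivalence, i.e.\ equivalence uniform in the predicate interpretation, rather than with literal agreement under the fixed $\sstar$.
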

\begin{proof}
We have $\shat(\kappa)=1$ iff $\Pol$ accepts $\kappa$, or escalates it and the escalation model, itself a fixed function of $\kappa$, accepts it. Thus $\shat$ is a deterministic function of $\kappa$, and the plan computes $Q$ under the predicate interpretation $\shat$. Classical rewrites of positive RA are valid for any deterministic interpretation of the predicates, so $L_1[\shat](\D)=L_2[\shat](\D)$.
\end{proof}

\begin{example}[Pushdown under a PD policy]\label{ex:invariance}
With ``\act{Accept} iff $p\ge0.5$'', the pushdown plan of \cref{ex:instances} accepts $r_1,r_2,r_3$ and returns their $47$ orders. The pull-up plan evaluates $150$ instances, but each of the $100$ instances of $r_4$ carries the same text and the same $p=0.3$ and is rejected each time; it returns the same $47$ orders. The two plans differ in cost ($5$ vs.\ $150$ model calls) but return the same result. For PD policies, plan selection therefore reduces to classical cost-based optimization.
\end{example}

\begin{corollary}[Source of coupling]\label{cor:coupling}
Output quality can differ between oracle-equivalent plans only through (a) population-calibrated policies, (b) context-aware policies, (c) approximate (non-sound) pruning, or (d) model nondeterminism. In particular, JEVDB-Flash's accuracy is invariant to pushdown, while that of its quality-controlled mode is not, since its escalation band is fit on a sample of the candidates its plan presents. JEVDB motivates pushdown by cost considerations (\citealp[§3.1]{jevdb2026}). \Cref{prop:pc-unsound} below shows that, in its calibrated mode, pushdown also affects accuracy.
\end{corollary}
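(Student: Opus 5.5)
The plan is to argue by contraposition from \cref{prop:invariance}, making explicit which hypothesis of that proof can fail. Suppose $L_1\oeq L_2$ and the physical plans $P_1=(L_1,\Pol_1)$ and $P_2=(L_2,\Pol_2)$ differ in output quality on some $\D$. Since relational operators introduce no errors (\cref{def:oracle}, \cref{rem:classical}), the proof of \cref{prop:invariance} shows the following. If each semantic operator in each plan behaves as a single fixed deterministic function $\shat:\kappa\mapsto\{0,1\}$, and this function is the same in both plans, then both plans compute $Q$ under the interpretation $\shat$. Their outputs then coincide as bags, so their quality coincides as well. The whole argument therefore reduces to classifying the ways in which the effective decision on an instance $(s,\kappa)$ can fail to be one common deterministic function of $\kappa$.

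I would enumerate these by walking through \cref{def:policy}. The action, and hence the effective answer, is a function of $\kappa$, of the context, and of any randomness.

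\begin{itemize}
\item If it depends on a statistic of the presented multiset $I_L(\D)$, which differs between $L_1$ and $L_2$ by \cref{ex:instances}, the policy is population-calibrated. This is case (a).
\item If it depends on fan-out, downstream cost, or plan position, which also differ across logical plans, the policy is context-aware. This is case (b).
\item If \act{Skip} is applied on a condition that is not a sound necessary condition, the dropped set is not determined by $\shat$ evaluated on the tuple. This is case (c). Sound pruning drops only tuples for which $\sstar=0$ and $\shat=0$ would hold anyway, so it changes nothing.
\item If the model's score $p_M$ or the escalation model's answer is random, the same $\kappa$ may receive different answers. This is case (d).
\end{itemize}

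If none of (a)--(d) applies, the effective decision is a fixed function of $\kappa$ alone, with escalation answers also fixed. The policy is then pointwise-deterministic, and \cref{prop:invariance} gives equal quality, which is a contradiction.

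For the JEVDB instance, I would note that Flash mode uses a fixed threshold, which makes it PD by \cref{tab:policy-classes}, so \cref{prop:invariance} applies directly. The quality-controlled mode fits $[\ell,h)$ on a sample drawn from $I_L(\D)$, so it falls under (a). The claim that its accuracy actually changes, rather than merely being allowed to change, is deferred to the counterexample in \cref{prop:pc-unsound}.

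The main obstacle is making the enumeration exhaustive rather than suggestive, because the notion of ``context'' in \cref{def:policy} is open-ended. I would handle this by \emph{defining} (b) to cover any dependence on non-$\kappa$ context that is not a statistic of $I_L(\D)$. The four cases then partition the failure of the PD condition by construction.

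A second subtlety is \act{Skip} in a PD policy. A skip that depends only on $\kappa$ is simply $\shat(\kappa)=0$ and stays within the PD case. I would note this so that (c) refers only to pruning whose skip decision depends on something other than $\kappa$, or that is unsound relative to a deterministic $\shat$.
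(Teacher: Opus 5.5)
Your route is the one the paper intends. The corollary is read off from the proof of \cref{prop:invariance}, whose only hypothesis is that the implemented predicate $\shat$ is a single deterministic function of $\kappa$ shared by both plans, and (a)--(d) name the ways this can fail. Your handling of the two JEVDB modes, with the actual change deferred to \cref{prop:pc-unsound}, also matches. Two steps need repair, however. First, compare $(L_1,\Pol)$ with $(L_2,\Pol)$ for one policy $\Pol$, as in \cref{def:impl-eq}. With $\Pol_1\neq\Pol_2$, two fixed thresholds on the same logical plan already give oracle-equivalent plans of different quality that fall under none of (a)--(d). So your closing contradiction (``the policy is then pointwise-deterministic'') does not go through as you set it up. For the partition to be exhaustive, (d) should also absorb any randomness in the decision, including coin flips of the policy itself and not only the model's.

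Second, your reason for excluding sound pruning is incorrect. Soundness in \cref{prop:pruning} is relative to $\sstar$, not to $\shat$. A sound step may drop a tuple with $\sstar=0$ that the model would have accepted, i.e., remove a false positive, so the output does change. Your third bullet is also inconsistent with your last paragraph. By that paragraph, any skip that is a fixed function of $\kappa$ and part of the shared policy folds into $\shat$. This holds for an approximate deterministic filter just as much as for a sound one, so unsoundness relative to $\shat$ is not what places a step in (c). The consistent account is that pruning couples quality to the plan only when the skip decision is random (as in the $\beta_t$ model of \cref{prop:pruning}) or plan-dependent (present in one plan only, or built from plan-specific data). Soundness then matters because, by \cref{prop:pruning}, such a difference can only remove false positives and never lowers recall or precision, whereas approximate pruning can lower them. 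So (c) is best read as the pruning that can make quality worse: sound pruning may still change the output, but only for the better.
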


\paragraph{Confidence-relative equivalence.}
If plans can differ in quality, an optimizer needs to know their quality in order to compare them. Ground truth is not available at optimization time, but calibrated confidences provide the predicted quality of every candidate plan (\cref{def:predicted}), which an optimizer can use for this purpose.

\begin{definition}[Confidence-relative equivalence and dominance]\label{def:conf-eq}
Fix a calibrated decision model $M$. Plans $P_1,P_2$ with $P_1\oeq P_2$ (on their logical parts) are \emph{$M$-equivalent} on $\D$, written $P_1\equiv_M P_2$, if $\widehat{\Rec}_M(P_1)=\widehat{\Rec}_M(P_2)$ and $\widehat{\Prec}_M(P_1)=\widehat{\Prec}_M(P_2)$. $P_1$ \emph{$M$-dominates} $P_2$ if it is at least as good on both and no more expensive in expectation.
\end{definition}

\begin{example}\label{ex:conf-eq}
In \cref{tab:toy}, escalating $r_3$ or escalating $r_4$ costs the same one LLM call, but the predicted output quality is $(0.58,0.91)$ for $r_3$ and $(0.998,0.92)$ for $r_4$. The second plan $M$-dominates the first. Without confidences, the two plans cannot be told apart before execution, as both consist of the threshold plan plus one escalation.
\end{example}

\Cref{fig:toyspace} shows all physical plans of the running example by cost and predicted quality, and \cref{tab:eq-levels} summarizes the four levels of equivalence.

\begin{figure}[t]
\centering
\begin{tikzpicture}
\begin{axis}[paper, width=0.92\linewidth, height=8.4cm,
  xmode=log, xmin=5, xmax=12000, ymin=0, ymax=1.06,
  xlabel={expected cost (log scale; $1$ per Jev call, $50$ per LLM call, $0.05$ per joined row)},
  ylabel={predicted $\min(\widehat{\Rec},\widehat{\Prec})$ on output rows},
  legend style={at={(0.99,0.03)},anchor=south east,fill=white,fill opacity=0.9,text opacity=1,draw=cGrid}]
  \addplot[draw=none,fill=cAdm,forget plot] coordinates {(5,0.9) (12000,0.9) (12000,1.06) (5,1.06)} \closedcycle;
  \draw[cAqua!70!black,dashed] (axis cs:5,0.9) -- (axis cs:12000,0.9);
  \node[font=\footnotesize,color=cAqua!50!black,anchor=north west] at (axis cs:420,1.05) {admissible: $\widehat{\Rec},\widehat{\Prec}\ge0.9$};
  \addplot[only marks, mark=*, mark size=1.4pt, color=cBlue, fill opacity=0.45, draw opacity=0.6]
    table[x=cost, y=q, restrict expr to domain={\thisrow{plan}}{0:0}] {sim/data/toy_space.dat};
  \addplot[only marks, mark=square*, mark size=1.3pt, color=cOrange, fill opacity=0.45, draw opacity=0.6]
    table[x=cost, y=q, restrict expr to domain={\thisrow{plan}}{1:1}] {sim/data/toy_space.dat};
  \legend{pushdown plan, pull-up plan (no caching)}
  \draw[black,thick] (axis cs:7.35,0.576) -- (axis cs:157.5,0.576);
  \node[font=\footnotesize,anchor=south,align=center,fill=white,inner sep=1pt] at (axis cs:34,0.592)
    {threshold $0.5$ under pushdown and pull-up:\\same answer (\cref{prop:invariance}), $21\times$ the cost};
  \draw[cInk] (axis cs:28,0.855) -- (axis cs:58.85,0.915);
  \draw[cInk] (axis cs:200,0.855) -- (axis cs:108.75,0.94);
  \draw[cInk] (axis cs:900,0.855) -- (axis cs:258.6,0.995);
  \node[draw=black,fill=cAqua,star,star points=5,inner sep=1.6pt] at (axis cs:58.85,0.922) {};
  \node[draw=black,fill=white,circle,inner sep=1.4pt] at (axis cs:108.75,0.946) {};
  \node[draw=black,fill=white,circle,inner sep=1.4pt] at (axis cs:258.6,1.0) {};
  \node[draw=black,fill=white,circle,inner sep=1.4pt] at (axis cs:12.5,0.474) {};
  \node[font=\footnotesize,anchor=north,align=center,fill=white,inner sep=1pt] at (axis cs:24,0.855) {\textbf{optimum}: escalate $r_4$\\cost $59$};
  \node[font=\footnotesize,anchor=north,align=center,fill=white,inner sep=1pt] at (axis cs:200,0.855) {band $[0.2,0.8)$\\cost $109$};
  \node[font=\footnotesize,anchor=north,align=center,fill=white,inner sep=1pt] at (axis cs:900,0.855) {escalate all\\cost $259$};
  \node[font=\footnotesize,anchor=north,fill=white,inner sep=1pt] at (axis cs:12.5,0.45) {accept all};
\end{axis}
\end{tikzpicture}
\caption{The physical plan space of the running example, predicted from confidences alone. Each point is one plan: a logical plan (pushdown or pull-up) and an action per review ($3^5\times2=486$ plans). \emph{Equivalence:} all points are oracle-equivalent (same $\Qstar$); a PD policy gives two points at the same height, the same answer at different cost (black segment); points at the same height are $M$-equivalent (\cref{def:conf-eq}). \emph{Optimization:} the admissible plans are the shaded band, and cost--accuracy optimization picks its leftmost point. A classical optimizer with a fixed threshold only chooses between the two ends of the black segment and keeps the cheaper one, at predicted quality $0.58$; the operator-level band is admissible but costs $1.8\times$ the optimum.}\label{fig:toyspace}
\end{figure}

\begin{table}[t]
\centering
\caption{Four levels of plan equivalence, from strongest to weakest. Only the last two levels distinguish plans of different quality, and checking them before execution requires calibrated confidences.}\label{tab:eq-levels}
\small
\begin{tabularx}{\linewidth}{@{}lYYl@{}}
\toprule
Level & $P_1$ and $P_2$ are related iff & In \cref{fig:toyspace} & Checkable without labels? \\
\midrule
Oracle ($\oeq$) & same answer under oracle predicates & all $486$ points & yes (classical rules) \\
Implementation & same answer under the implemented predicates & two ends of a black segment (PD policy) & yes, for PD policies \\
Confidence-relative ($\equiv_M$) & same predicted output recall and precision & points at the same height & yes, from confidences \\
Admissibility ($\Adm$) & both meet the output target & points in the shaded band & predicted form ($\Adm_M$) only \\
\bottomrule
\end{tabularx}
\end{table}

\begin{definition}[Admissibility]\label{def:admissible}
$P$ is \emph{$T$-admissible} for $Q$ if it satisfies $T$ (\cref{def:target}) on $\D$. The admissible set $\Adm(Q,T,\D)$ takes the role of the equivalence class of correct plans. Its predicted counterpart $\Adm_M(Q,T,\D)$ uses $\widehat{\Rec}_M$ and $\widehat{\Prec}_M$.
\end{definition}

\begin{definition}[Quality-sound rewrite]\label{def:sound}
A rewrite $r$ is \emph{quality-sound} for a policy class $\PolClass$ if for every $P$ with policy in $\PolClass$, $P\in\Adm$ implies $r(P)\in\Adm$.
\end{definition}

\begin{proposition}\label{prop:pd-sound}
Every classical rewrite of positive RA is quality-sound for PD policies.
\end{proposition}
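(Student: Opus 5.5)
The plan is to reduce quality-soundness to \cref{prop:invariance}. Fix a classical rewrite $r$ of positive RA and a physical plan $P=(L,\Pol)$ with $\Pol$ pointwise-deterministic. The first step is to make explicit that $r$ acts only on the logical part, so $r(P)=(r(L),\Pol)$. The policy stays the same function of $\kappa$, which means $r(P)$ also has a PD policy. Because $r$ is a classical equivalence of positive RA, \cref{rem:classical} together with \cref{def:oeq} gives $r(L)\oeq L$.

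The second step invokes \cref{prop:invariance}. Since $r(L)\oeq L$ and the policy is PD, $P$ and $r(P)$ are implementation-equivalent, i.e., $L[\shat](\D)=r(L)[\shat](\D)$ for every $\D$. Both plans are compared against the same $\Qstar(\D)$, because oracle equivalence gives $L^*(\D)=r(L)^*(\D)$. Their outputs are therefore identical, so $\Rec$ and $\Prec$ coincide on every $\D$.

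The third step transfers admissibility. Under a PD policy the output is a deterministic function of $\D$ when the escalation model is fixed. The event in \cref{def:target} then has the same indicator for both plans, and so the same probability. The expected variant follows immediately. For the predicted variant $\Adm_M$, I would note that the sets $A,R,E$ correspond instance-by-instance: each distinct $\kappa$ receives the same action in both plans. The contributions $c_i$, counted as output tuples using decisions on $\kappa$, are determined by the common output and its lineage under the implemented interpretation. Hence $\widehat{\Rec}_M$ and $\widehat{\Prec}_M$ agree as well.

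I expect the main obstacle to be the predicted variant. The multisets $I_L(\D)$ and $I_{r(L)}(\D)$ differ, as \cref{ex:instances} shows, so per-instance sums could a priori differ. I would handle this by grouping instances by $\kappa$ and checking that the total contribution of a given $\kappa$ is plan-independent. Alternatively, if this is too delicate in the multi-operator setting, I would state soundness for the true and expected targets, which is what \cref{def:sound} refers to through $\Adm$, and treat the predicted version as a remark. Model nondeterminism is excluded by the PD assumption, since the escalation model is also a fixed function of $\kappa$, as in the proof of \cref{prop:invariance}.
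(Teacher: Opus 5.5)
Your proposal is correct and takes the same route as the paper, whose proof is the one-line observation that, by \cref{prop:invariance}, $r(P)$ returns the same output as $P$ on every $\D$ (measured against the same $\Qstar(\D)$), so membership in $\Adm$ transfers. Your extra discussion of the predicted variant is not needed, because \cref{def:sound} is stated through $\Adm$ rather than $\Adm_M$, and falling back to a remark there is the right call.
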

\begin{proof}
By \cref{prop:invariance}, $r(P)$ returns the same output as $P$ on every $\D$.
\end{proof}

\begin{proposition}[Pushdown is not quality-sound for PC policies]\label{prop:pc-unsound}
There is a query, database, target, and population-calibrated policy such that the pull-up plan is admissible and the pushdown plan is not.
\end{proposition}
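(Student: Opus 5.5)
The plan is to give an explicit counterexample built on the running example. The only mechanism that can separate the two oracle-equivalent plans is the one isolated in \cref{cor:coupling}(a): a population-calibrated band is fit on $I_L(\D)$, and the pushdown and pull-up plans present different multisets (\cref{ex:instances}). Under pull-up, review $r_4$ appears $100$ times among $150$ instances. A band calibrated on that population must therefore treat the probability mass near $p=0.3$ as decisive. Under pushdown, $r_4$ is one instance among five, and the same calibration rule can leave it outside the band.

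Concretely, I fix the query $Q$ and database $\D$ of \cref{ex:oracle}, with the labels given there ($r_1,r_2,r_4$ true). I take the target $T=(0.9,0.9,0)$ and the PC policy class of JEVDB's quality-controlled mode. The rule is: choose the narrowest band $[\ell,h)$, from a fixed finite grid such as $\ell\in\{0.2,0.35\}$ and $h=0.8$, whose \emph{operator-level} predicted recall and precision, computed with $c_i=1$ over the presented instances (taken in full, so no sampling randomness arises), are at least $0.95$. Inside the band the policy escalates to a perfect judge; above it accepts, below it rejects. The rule is deterministic given the population, so the probabilistic constraint of \cref{def:target} reduces to a deterministic check, and I verify admissibility directly on actual $\Rec$ and $\Prec$ against $\Qstar(\D)$.

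The steps are as follows.
\begin{enumerate}
\item For pushdown, compute the operator-level predicted quality of the candidate band $[0.35,0.8)$ over the five instances. This band escalates $r_3$ and rejects $r_4$, giving predicted recall $\approx (1.88+0.6)/(2.48+0.35)$. I will tune the grid, or the reviews' $p$ values, so that this clears the tolerance and the narrowest admissible band excludes $r_4$. Then $r_4$ is rejected, the output misses $100$ of $142$ true orders, and $\Rec\approx0.30<0.9$, so the pushdown plan is not admissible.
\item For pull-up, the same rule sees $100$ copies of $p=0.3$. With $r_4$ rejected, operator-level predicted recall falls to about $(\ldots)/(\ldots+30)$, well below $0.95$. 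The rule is therefore forced to the wider band $[0.2,0.8)$, which escalates $r_4$ (and $r_3$).
\item The judge is perfect, so the pull-up output is exactly $\Qstar(\D)$, with recall and precision both $1$, and the pull-up plan is admissible.
\end{enumerate}

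The main obstacle is choosing the numbers so that the narrowest-band rule genuinely selects different bands in the two populations while remaining a natural PC policy. The five-review recall in step 1 is only about $0.88$, so I may need to adjust things. I would first try adding a few extra high-confidence, low-fan-out reviews to the database, which raises pushdown's operator-level recall above $0.95$ without affecting the pull-up computation (where $r_4$'s mass dominates). Failing that, I would lower the tolerance to $0.85$. Either adjustment is a routine arithmetic check once the construction is fixed.
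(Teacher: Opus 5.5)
Your counterexample works once the numbers are repaired, and it uses the same mechanism as the paper's proof. Under pushdown the calibration statistic is taken over reviews; under pull-up it is taken over joined rows, so each review is weighted by its fan-out. The difference is that you realize this on the other error type.

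\textbf{The paper's construction.} The paper builds a fresh database of $90$ niche reviews ($p=0.98$, $f=1$) and $10$ popular ones ($p=0.80$, $f=100$). Its policy only accepts or escalates, with the threshold chosen for operator-level precision $0.95$. Recall is therefore $1$ for both plans, and the plans separate purely on precision ($0.815$ vs.\ $0.998$), stated in the expected variant of the target.

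\textbf{Your construction.} You stay on the five reviews of \cref{tab:toy} with the fixed labels of \cref{ex:oracle}. You separate the plans on recall through a rejection region and check admissibility on the realized $\Rec,\Prec$ with $\delta=0$. Because the policy is deterministic and the labels are fixed, you need no calibration argument and no ratio-of-expectations approximation. The paper's version, in turn, needs no tuning, and its selection rule (the smallest threshold meeting a precision tolerance) is more natural than a two-point grid.

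\textbf{Step 2 fails at tolerance $0.95$.} Under pull-up, the wider band $[0.2,0.8)$ has operator-level predicted precision $70.96/(70.96+4.04)\approx0.946<0.95$. So at $0.95$ neither grid band qualifies, and the rule selects nothing. You only checked that the narrow band fails, not that the wide one passes.

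\textbf{Tolerance $0.85$ repairs both sides.}
\begin{itemize}
\item Under pushdown, $[0.35,0.8)$ gives predicted $(0.876,0.954)$ and is chosen. Then $r_4$ is rejected, and the output is $42$ correct orders with $\Rec=42/142\approx0.30$.
\item Under pull-up, $[0.35,0.8)$ has predicted recall $40.96/71.11\approx0.58$ and fails, while $[0.2,0.8)$ gives $(0.998,0.946)$ and is chosen. Both $r_3$ and $r_4$ are escalated, the judge rejects $r_3$ and confirms $r_4$, and the output is exactly $\Qstar(\D)$.
\end{itemize}

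\textbf{Your other fix does affect pull-up.} You claimed that adding reviews leaves the pull-up computation unchanged, but the added reviews are needed there too. With $k$ added true reviews of $p=1$ and $f=1$, pushdown needs $k\ge5$. Pull-up's wider band needs $(70.96+k)/(75+k)\ge0.95$, i.e.\ $k\ge6$.
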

\begin{proof}
Consider the query of the running example and the target $T=(0.9,0.9)$ in its expected variant. Let \textit{Reviews} contain $90$ reviews of niche products with $p=0.98$ and fan-out $1$, and $10$ reviews of popular products with $p=0.80$ and fan-out $100$. The policy is calibrated on the instances presented by its plan. It accepts every instance with $p\ge h$, where $h$ is the smallest score at which the expected precision of the accepted instances reaches $0.95$, and escalates the rest to a perfect judge. No instance is rejected, so recall is $1$ for both plans.
\begin{itemize}[leftmargin=*,nosep]
  \item \emph{Pushdown} calibrates on the $100$ reviews. Accepting all of them has expected precision $(90\cdot0.98+10\cdot0.80)/100=0.962\ge0.95$, so $h=0.80$ and nothing is escalated. The expected precision on output rows is $(88.2+800)/(90+1000)=0.815<0.9$, so the plan is \emph{not admissible}.
  \item \emph{Pull-up} calibrates on the $1{,}090$ joined rows. Accepting all of them has expected precision $0.815<0.95$, so $h=0.98$ and the $1{,}000$ rows of popular products are escalated. The judge keeps only the true ones, so the expected output precision is $888.2/(90+800)=0.998$ and the plan is \emph{admissible}.
\end{itemize}
The two plans calibrate on different populations, namely reviews for pushdown and joined rows (i.e., reviews weighted by fan-out) for pull-up. The popular reviews are $10\%$ of the first population and $92\%$ of the second. A PD policy with the pull-up plan's band would give both plans the same answer (\cref{prop:invariance}).
\end{proof}

\Cref{sec:experiments} reproduces this effect on a $2{,}000$-review workload in which the calibration sample is drawn at random.

\begin{proposition}[Sound and approximate pruning]\label{prop:pruning}
Call a pruning step \emph{sound} if it removes only tuples $t$ with $\sstar(t)=0$ (e.g., a lossless semantic Bloom filter or a relational Bloom semijoin). Under a PD or context-aware policy whose actions on the surviving instances are unchanged, a sound pruning step never decreases recall and never decreases precision. An approximate step that removes each tuple $t$ with $\sstar(t)=1$ independently with probability $\beta_t$ lowers the expected number of true positives by at most $\sum_t\beta_t\,q_t\,c^{\Pol}(t)$, with equality when no output's witness contains two such tuples (e.g., one semantic operator); here $q_t$ and $c^{\Pol}(t)$ are as in \cref{def:contribution}.
\end{proposition}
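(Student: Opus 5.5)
The plan is to compare the pruned and unpruned plan tuple by tuple in the output, using the fact that the policy's actions on surviving instances are unchanged. Write $O$ for the output of the unpruned plan and $O'$ for the output with the pruning step inserted. Positive RA is monotone, and the decisions on surviving instances are identical (for a PD policy by \cref{prop:invariance}, since $\shat$ is a fixed function of $\kappa$; for a context-aware policy by hypothesis). Hence every derivation in $O'$ is also a derivation in $O$, and $O'\subseteq O$ as bags.

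\emph{Sound pruning.} I will show that $O'$ loses no true positive. A tuple of $O\cap\Qstar(\D)$ has an oracle derivation, and every semantic witness $t$ in it satisfies $\sstar(t)=1$. A sound step removes only tuples with $\sstar(t)=0$, so none of these witnesses is removed and the derivation survives. Thus $O'\cap\Qstar(\D)=O\cap\Qstar(\D)$, and recall, whose numerator is unchanged and whose denominator $|\Qstar(\D)|$ is fixed, is unchanged. For precision, the numerator is unchanged while $|O'|\le|O|$, so precision cannot decrease. The one subtlety is that under bag semantics an output tuple of $O\cap\Qstar$ might be realized in $O$ through a derivation using a pruned (oracle-false) tuple, even though some oracle derivation exists. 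I would handle this by matching multiplicities derivation by derivation: the count of output copies produced via all-true witnesses is preserved, and $|O'\cap O^*|\ge$ that count $\ge$ the multiset intersection contributed by them. This is the step I expect to need care, together with the convention that the intersection is taken with multiplicities.

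\emph{Approximate pruning.} I will use linearity of expectation over output derivations. The true positives lost are those copies in $O\cap\Qstar$ whose derivation contains at least one removed true tuple. A union bound over the true tuples $t$ in the witness gives
\[
\E[\TP(O)-\TP(O')]\le\sum_t\Pr[t\text{ removed}]\cdot\E[\#\text{true outputs of }O\text{ using }t]=\sum_t\beta_t\,q_t\,c^{\Pol}(t),
\]
where I read $q_t$ as the probability that $t$ is true and $c^{\Pol}(t)$ as the number of outputs of the plan whose derivation uses $t$, as in \cref{def:contribution}. Independence of the removals from the labels lets the product factor. The union bound is tight exactly when the events ``output $o$ is lost because of $t$'' are disjoint across $t$ for each $o$, that is, when no witness contains two prunable tuples. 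This holds, for example, with a single semantic operator, which gives the equality claim.
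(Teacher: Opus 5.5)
Your argument is the paper's: a sound step removes no tuple of an oracle derivation, so true positives survive and only false-positive rows disappear. The approximate bound is a union bound over the removed true tuples of each witness plus linearity of expectation, and it is exact when no witness contains two prunable tuples.

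\textbf{The gap: your patch for the bag subtlety does not work.} Your chain gives $|O'\cap O^*|\ge N$, where $N$ is the number of copies in $O$ whose witness is all-true. Recall, however, needs $|O'\cap O^*|\ge|O\cap O^*|$. Under value-level intersection, $|O\cap O^*|$ exceeds $N$ whenever a copy derived through a false tuple coincides in value with an oracle copy. In that case recall can genuinely drop.

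Here is a concrete instance. Let reviews $r_a$ ($p=0.3$, $\sstar(r_a)=1$) and $r_b$ ($p=0.7$, $\sstar(r_b)=0$) concern one product with a single order $o$, under the PD policy ``accept iff $p\ge0.5$''. Then $O=O^*=\{o\}$ as value bags, so $\Rec=1$. A sound step removing $r_b$ leaves $O'=\emptyset$ and $\Rec=0$, with $r_a$ still rejected. So under the value-level multiset intersection of the quality definition the claim is false, and no matching of multiplicities can rescue it.

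The fix is to state the convention that the paper's proof uses silently, and that already underlies $\E[\TP]=\sum_o\prod_{j\in w(o)}q_j$ and \cref{def:contribution}. An output copy is identified with its derivation, and it is a true positive iff every decision on its witness is oracle-true. Under this convention the copy of $o$ in $O$ above is a false positive, $O'\cap O^*=O\cap O^*$ is immediate, and no patch is needed.

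\textbf{Two smaller corrections in the approximate part.} First, $c^{\Pol}(t)$ is not the number of plan outputs whose derivation uses $t$. It is $\sum_{o\,:\,t\in w(o)}\prod_{j\in w(o)\setminus\{t\}}q_j$. Let $N_t$ be the number of true copies using $t$. The identity $\E[N_t]=q_t\,c^{\Pol}(t)$ is exactly this bag-SPJ formula. Under your reading, $q_t\,c^{\Pol}(t)$ only upper-bounds $\E[N_t]$ once there are several semantic operators. The two readings agree for a single operator, so the equality case survives.

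Second, the removal of $t$ is not independent of $t$'s own label. You should write $\E\big[[t\text{ removed}]\,N_t\big]=\beta_t\,\E[N_t]$. This uses that $N_t>0$ forces $\sstar(t)=1$, and that given this, removal is an independent coin. Read literally, your expression $\Pr[t\text{ removed}]\cdot\E[N_t]$ counts the probability that $t$ is true twice.
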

\begin{proof}
A sound step removes no tuple that occurs in a derivation of $\Qstar(\D)$. Hence true positives are unaffected, and the removed tuples could only have produced false positives. For the approximate step, a true positive is lost only if some tuple of its witness is removed. By a union bound over the removed tuples and linearity of expectation, the expected loss is at most $\sum_t\beta_t$ times the expected number of true positives whose witness contains $t$, which is $q_t\,c^{\Pol}(t)$. If no witnesses are shared, the events are disjoint and the bound is exact.
\end{proof}
Under a PC policy, pruning also changes the calibration population, so \cref{prop:pc-unsound} applies to it as well.

\begin{example}[Non-transitivity of pairwise closeness]\label{ex:nontransitive}
Measure closeness by
\[
d(P_i,P_j)=\frac{|P_i(\D)\,\triangle\,P_j(\D)|}{|\Qstar(\D)|}
\]
and take $\Qstar(\D)$ to be the $142$ orders of \cref{ex:oracle}. Let $P_1$ return all $142$, $P_2$ drop $7$ of them, and $P_3$ drop $7$ more. Then $d(P_1,P_2)=d(P_2,P_3)=7/142<0.05$, yet $\Rec(P_3)=128/142=0.90$, which violates $\rho=0.93$ although $P_1$ is perfect. A sequence of rewrites, each of which is close to the previous plan, can thus leave the admissible set. For this reason, \cref{def:admissible} defines correctness relative to $\Qstar$.
\end{example}

\section{Error Propagation via Lineage}\label{sec:propagation}

In \cref{def:predicted}, the expected error of each decision is weighted by its contribution $c_i$ to the output. In this section, we define $c_i$ for plans with several semantic operators and for both base and intermediate tuples. We then use contributions in two directions: forward, to compute output quality from per-decision confidences, and backward, to derive per-tuple prices from an output target. Since false negatives cannot be recovered by later operators while false positives may still be filtered out, recall and precision compose differently, and they favor different operator orders.

\subsection{Lineage and contribution}

\begin{definition}[Decision lineage]\label{def:lineage}
For $o\in\Qstar(\D)$ under bag semantics, its \emph{witness} $w(o)$ is the set of decision instances evaluated on the derivation of $o$. Under set semantics, $o$ has a set of witnesses $W(o)$ and its lineage is a DNF over Boolean decision variables $x_i$, i.e., its provenance polynomial in the Boolean semiring~\citep{green2007provenance}:
\[
\Phi_o=\bigvee_{w\in W(o)}\;\bigwedge_{i\in w}x_i .
\]
\end{definition}

\begin{definition}[Output contribution of a decision instance]\label{def:contribution}
Contributions are defined for decision instances, which are the units a policy acts on. An instance may refer to a base tuple (a semantic filter on a base relation) or to an intermediate tuple (a join pair, or a filter above a join). We distinguish two versions.
\begin{itemize}[leftmargin=*]
  \item \emph{Oracle contribution} $c^*(i)$: the number of tuples of $\Qstar(\D)$ whose witness contains $i$. It depends only on $Q$ and $\D$, not on the plan, and we use it in definitions.
  \item \emph{Policy contribution} $c^{\Pol}(i)$: the expected number of correctly returned outputs lost if $i$ is answered negatively, given the other decisions under $\Pol$. This is the quantity used by the optimizer:
  \[
  c^{\Pol}(i)=\E_{\Pol}[\TP\mid x_i=1]-\E_{\Pol}[\TP\mid x_i=0]\;\overset{\text{bag SPJ}}{=}\;\sum_{o\,:\,i\in w(o)}\;\prod_{j\in w(o)\setminus\{i\}}q_j ,
  \]
  where $q_j$ is the probability that decision $j$ is a correct positive under $\Pol$ ($p_j$ if $j$ is accepted or escalated to a perfect judge, $0$ if rejected).
\end{itemize}
\end{definition}

With a single semantic operator, all other factors equal $1$, and $c^{\Pol}(r)=f_r$ is the fan-out used in \cref{sec:quality}. With several operators, downstream decisions reduce the contribution of upstream decisions, and upstream decisions determine whether a downstream decision affects the output at all.

\begin{example}[Base and intermediate decisions]\label{ex:two-ops}
We extend the running example with a second predicate that reads both tables and is therefore evaluated on joined rows: \texttt{SEM((r.text, o.notes), 'the customer reports the same defect')}. Its decision instances are intermediate tuples $(r,o)$. For review $r_4$ ($p=0.3$, $100$ orders), suppose $20$ of its joined rows get confidence $p=0.9$ and $80$ get $p=0.05$, and the second operator accepts iff $p\ge0.5$; for the accepted rows, $q_o=p_o=0.9$.
\begin{itemize}[leftmargin=*,nosep]
  \item \emph{Output to base.} An order of $r_4$ is a true answer only if both predicates hold. Rejecting $r_4$ loses the $20$ accepted rows, each a true positive with probability $0.9$, so $c^{\Pol}(r_4)=20\cdot0.9=18$ rather than $100$. The expected number of output false negatives caused by rejecting $r_4$ is $0.3\cdot18=5.4$ instead of $30$; because the downstream predicate is selective, an error on $r_4$ costs less.
  \item \emph{Output to intermediate.} A joined row $(r_4,o)$ can only contribute if $r_4$ is kept, i.e., $c^{\Pol}((r_4,o))=q_{r_4}=0.3$ if $r_4$ is accepted and $0$ if it is rejected. If $r_4$ is rejected, none of the $100$ decisions of the second operator on its rows can affect the output, and evaluating them is wasted work; this is the usual argument for pushing the first filter down. If $r_4$ is escalated and confirmed, each of these decisions matters fully.
\end{itemize}
\end{example}

\Cref{fig:propagation} illustrates both cases.

\begin{figure}[t]
\centering
\begin{tikzpicture}[
  font=\small,
  box/.style={draw=cInk,rounded corners=2pt,align=center,inner sep=3pt,fill=white},
  hot/.style={box,draw=cOrange,very thick},
  down/.style={-{Stealth[length=2mm]},thick,cOrange},
  up/.style={-{Stealth[length=2mm]},thick,cBlue},
  faint/.style={-{Stealth[length=1.6mm]},cInk!60},
  lab/.style={font=\footnotesize,align=center,fill=white,inner sep=1.5pt}]
\node[font=\footnotesize\itshape,color=cInk,anchor=east] at (-1.1,0) {output};
\node[font=\footnotesize\itshape,color=cInk,anchor=east,align=right] at (-1.1,-2.3) {intermediate\\decisions $(r,o)$};
\node[font=\footnotesize\itshape,color=cInk,anchor=east,align=right] at (-1.1,-4.8) {base\\decisions $r$};
\node[box,minimum width=11.8cm] (out) at (4.9,0) {bag of orders to warn \quad target $\rho=\gamma=0.9$ \quad every true order counts once};
\node[box] (rest) at (0.9,-2.3) {rows of $r_1,r_2,r_3,r_5$\\$p$ per row};
\node[hot] (a) at (6.0,-2.3) {$20$ rows of $r_4$\\$p=0.9$, \act{Accept}};
\node[box] (b) at (10.2,-2.3) {$80$ rows of $r_4$\\$p=0.05$, \act{Reject}};
\foreach \n/\p/\x in {r_1/0.98/-0.4, r_2/0.90/0.9, r_3/0.60/2.2} {
  \node[box,minimum width=1.05cm] (\n) at (\x,-4.8) {$\n$\\$p{=}\p$};
}
\node[hot,minimum width=1.4cm] (r4) at (6.0,-4.8) {$r_4$\\$p{=}0.30$};
\node[box,minimum width=1.05cm] (r5) at (10.2,-4.8) {$r_5$\\$p{=}0.05$};
\draw[down] (out.south -| a) -- node[lab,right,xshift=2pt]{each row: $1$ output\\if both predicates hold} (a);
\draw[faint] (out.south -| b) -- (b);
\draw[faint] (out.south -| rest) -- (rest);
\draw[down] ([xshift=4pt]a.south) -- ([xshift=4pt]r4.north);
\node[lab,anchor=north west,align=left,text=cOrange!80!black] at ([xshift=8pt,yshift=-4pt]a.south) {$c^{\Pol}(r_4)=\sum_o q_o$\\$=20\cdot0.9=18$};
\draw[faint,dashed] (b.south) -- node[lab,pos=0.12,right,xshift=4pt,yshift=3pt]{rejected anyway: $+0$} (r4.north east);
\foreach \n in {r_1,r_2,r_3} \draw[faint] (rest.south) -- (\n.north);
\draw[up] ([xshift=-10pt]r4.north) to[out=110,in=-110] ([xshift=-10pt]a.south);
\node[lab,anchor=east,text=cBlue!80!black] at (5.25,-3.55) {$c^{\Pol}((r_4,o))=0.3$ if $r_4$ kept, $0$ if rejected:\\then skip these $100$ evaluations};
\node[draw=cOrange,rounded corners=2pt,align=left,font=\footnotesize,anchor=north,inner sep=4pt] at (6.0,-5.45)
  {reject $r_4$ $\Rightarrow$ $\E[\FN]=p\cdot c^{\Pol}=0.3\cdot18=\mathbf{5.4}$ rows\\
   without the second predicate: $0.3\cdot100=30$ rows};
\end{tikzpicture}
\caption{Propagation of contributions from the output to intermediate and base decisions (\cref{ex:two-ops}). Orange: the contribution of a decision is the expected number of true outputs it contributes to, computed top-down. Because the second predicate is selective, the contribution of $r_4$ is $18$ rather than $100$, and rejecting $r_4$ loses $5.4$ instead of $30$ expected rows. Blue: an intermediate decision only matters if its base tuple is kept, so if $r_4$ is rejected, the $100$ decisions of the second predicate on its rows need not be evaluated.}\label{fig:propagation}
\end{figure}

Note that $c^{\Pol}(i)$ is the influence of $x_i$ on expected output quality in the sense of sensitivity analysis for probabilistic databases~\citep{kanagal2011sensitivity}. Tuple Shapley values~\citep{livshits2020shapley} would be an alternative way to attribute output quality to decisions.

\subsection{From decision errors to output error}

Under bag semantics and independent decisions, each expected output count is a sum over output tuples of a product over their witnesses. For instance,
\[
\E[\TP]=\sum_{o}\prod_{j\in w(o)}q_j .
\]
For one semantic operator this is \cref{def:predicted},
\[
\E[\FN]=\sum_{i\,\text{rejected}}p_i\,c^{\Pol}(i),\qquad \E[\FP]=\sum_{i\,\text{accepted}}(1-p_i)\,c^{\Pol}(i),
\]
i.e., the output error is the sum of the per-decision errors $1-\mathrm{conf}_i$ weighted by $c^{\Pol}(i)$. With several operators, the expected number of false negatives caused by rejecting $i$ alone is $p_i\,c^{\Pol}(i)$, as in \cref{ex:two-ops}. Under set semantics, which we discuss in \cref{sec:estimation}, the sum of products is replaced by the probability of a DNF.

\paragraph{Computing contributions by backward propagation.}
Computing $c(i)$ requires propagating weight from the output back to every decision instance. For an acyclic join tree under bag semantics, this can be done by sum-product message passing: a top-down pass sends each tuple the expected number of surviving completions in the rest of the tree, where semantic operators contribute factors $q_j$ (or selectivity estimates for operators not yet evaluated). This takes linear time in the size of the pass~\citep{abokhamis2016faq} and computes $c$ for base and intermediate instances simultaneously. In \cref{ex:two-ops}, the message sent from the orders of $r_4$ to $r_4$ is $\sum_o q_o\,[o\text{ accepted}]=18$. The cost saved by dropping $i$, $\Delta\Cost(i)$, can be computed in the same way in a cost semiring, as the expected number of downstream rows times the per-row cost of each downstream operator.

\subsection{Pushing an output target down to tuples}

An output-level target $(\rho,\gamma)$ constrains sums of per-decision terms, which suggests enforcing it through a price. Suppose each unit of expected output error has price $\lambda$. Resolving decision $i$, e.g., by escalation, removes $c^{\Pol}(i)\,(1-\mathrm{conf}_i)$ expected output errors at the cost of the escalation. Dropping a tuple saves its downstream cost but adds $c^{\Pol}(i)\,p_i$ expected false negatives. Each tuple can then decide locally by comparing these quantities. \Cref{prop:lagrange} makes this precise, using one price for the recall constraint and one for the precision constraint.

\begin{example}[Pricing the running example]\label{ex:price}
Let an escalation cost $50$ units and let each expected wrong output row cost $\lambda=10$ units. Escalating $r_4$ removes $30$ expected missing rows, worth $300>50$, so $r_4$ is escalated. Escalating $r_2$ would remove $4$ expected false rows (worth $40<50$) and escalating $r_3$ would remove $2$ (worth $20<50$), so both keep the decision of the cheap model. The resulting plan has predicted output quality $(0.998,0.92)$ and meets $T=(0.9,0.9)$ with a single LLM call. An operator-level band such as $[0.2,0.8)$ escalates both $r_3$ and $r_4$, and thus spends a second call on a review that accounts for only $2$ expected errors. With the second predicate of \cref{ex:two-ops}, the benefit of escalating $r_4$ drops to $5.4\cdot10=54$, which is still slightly above $50$; with a somewhat lower price, $r_4$ would be left to the cheap model.
\end{example}

\paragraph{Caveats.}
\begin{enumerate}[leftmargin=*]
  \item $c^{\Pol}$ is a marginal quantity. Contributions are not additive if two dropped instances share a witness, so selecting a set of instances to drop is a set-function problem with a coverage-type loss rather than a plain knapsack problem (\cref{sec:optimization}).
  \item When an upstream decision is made, downstream decisions are not yet known, so $c^{\Pol}$ must be estimated. Policy and contributions are thus defined by a fixed point.
  \item Under set semantics $c^{\Pol}$ is the influence on a DNF probability, \#P-hard in general (\cref{sec:estimation}).
  \item The denominator $|\Qstar(\D)|$ is the same for all instances and does not affect their ranking.
\end{enumerate}

\subsection{Recall and precision compose differently}

\begin{proposition}[Recall composes multiplicatively]\label{prop:recall}
For a bag SPJ query, suppose that every true-positive instance of operator $j$ is accepted with probability $R_j$, independently across instances. Then every true output survives with probability $\prod_j R_j$, so $\E[\Rec]=\prod_j R_j$. Consequently, a global target $\rho$ is allocated log-additively:
\[
\textstyle\sum_j\log R_j\ge\log\rho .
\]
\end{proposition}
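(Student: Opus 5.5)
The plan is to work output tuple by output tuple and then use linearity of expectation, as in the forward computation of $\E[\TP]$ in \cref{sec:propagation}.

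Fix a true output $o\in\Qstar(\D)$ of the bag SPJ query. By \cref{def:lineage}, $o$ has a single witness $w(o)$. Because $o$ is a true output, every decision instance in $w(o)$ is a true positive for its operator: $\sstar(\kappa)=1$ for each $(s,\kappa)\in w(o)$. Relational operators introduce no errors (\cref{rem:classical}). Hence $o$ is returned by the plan if and only if every instance in $w(o)$ is accepted, counting escalations that return a positive answer as acceptances.

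I will read the statement as ``$o$'s derivation involves exactly one decision instance per semantic operator $j$.'' This holds for a bag SPJ query in which each semantic operator occurs once on the derivation path. Under the independence hypothesis, these instances are accepted independently with probabilities $R_j$, so
\[
\Pr[o\in P(\D)]=\prod_j R_j .
\]
If some operator contributes several instances to one witness, the exponent would change. I will therefore state the one-instance-per-operator reading explicitly, or define $R_j$ per witness-level evaluation, which is the main point to get right.

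Next I sum over outputs. Write $\TP=\sum_{o\in O^*}\mathbf 1[o\in P(\D)]$ with multiplicities, where $O^*=\Qstar(\D)$. By linearity, $\E[\TP]=|O^*|\prod_j R_j$. The denominator $|O^*|$ of $\Rec$ in \cref{def:quality} is deterministic, since it depends only on $Q$ and $\D$. So $\E[\Rec]=\E[\TP]/|O^*|=\prod_j R_j$ exactly. This is not a ratio-of-expectations approximation, because the numerator is the only random quantity; that is the step worth remarking on. If $|O^*|=0$, I will adopt the convention that recall is $1$, or exclude that case.

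Finally, the allocation statement follows from $\E[\Rec]\ge\rho$ being equivalent to $\prod_j R_j\ge\rho$, and, since $\log$ is increasing and all $R_j>0$ (otherwise recall is $0$ and the target fails unless $\rho=0$), to $\sum_j\log R_j\ge\log\rho$.

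I will close by noting that precision has no such product form. Its denominator $|P(\D)|$ is random, and later operators can remove false positives. This motivates the separate treatment of precision that the section title anticipates.
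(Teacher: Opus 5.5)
Your proposal is correct and follows the paper's argument. A true output survives iff every instance on its witness (all of them true positives) is accepted, independence gives $\prod_j R_j$, and averaging over true outputs with the deterministic denominator $|\Qstar(\D)|$ gives $\E[\Rec]=\prod_j R_j$. Your explicit one-instance-per-operator reading of the witness, and your remark that this is an exact identity rather than a ratio-of-expectations approximation, are both assumptions the paper's short proof uses implicitly.
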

\begin{proof}
A true output survives iff every decision on its witness accepts its instance, all of which are true positives. By independence this has probability $\prod_jR_j$ for every true output, and $\E[\Rec]$ is the average over true outputs.
\end{proof}
For example, two filters with recall $0.95$ each yield an output recall of $0.9025$, so an output target of $0.95$ requires a recall of about $0.975$ per operator. The product form no longer holds when recall varies across instances with different $c(i)$; in \cref{tab:toy}, an operator recall of $0.88$ results in an output recall of $0.58$.

For an accepted instance $i$, let $d^{\Pol}(i\mid\neg)$ be the expected number of candidate output rows whose witness contains $i$ and which every \emph{other} decision on the witness accepts, conditioned on $\sstar(i)=0$.

\begin{proposition}[Precision is order- and correlation-dependent]\label{prop:precision}
Under bag semantics, accepting instance $i$ instead of rejecting it adds $(1-p_i)\,d^{\Pol}(i\mid\neg)$ expected false output rows. In particular, if the other decisions on $i$'s rows reject them whenever $\sstar(i)=0$, accepting $i$ costs no output precision, however uncertain $i$ is.
\end{proposition}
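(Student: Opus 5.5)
We compare the two plans that differ only in the action on $i$ and fix the actions on every other instance. Under bag semantics the false-positive count is a sum over candidate output rows, so linearity of expectation reduces the claim to a statement about each row separately.

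\emph{Decomposition.} We write $\FP=\sum_o \mathbf{1}[o \text{ returned}]\,\mathbf{1}[o\notin\Qstar(\D)]$, where the sum ranges over candidate output rows $o$ (derivations).
\begin{itemize}[leftmargin=*,nosep]
  \item Rows whose witness does not contain $i$ are returned or not independently of the action on $i$. They cancel in the difference.
  \item Rows whose witness contains $i$ are never returned when $i$ is rejected. When $i$ is accepted, such a row is returned iff every other decision on its witness accepts.
\end{itemize}
Hence the increment in $\E[\FP]$ is
\[
\sum_{o:\,i\in w(o)}\Pr\bigl[o\text{ returned}\wedge o\notin\Qstar(\D)\mid i\text{ accepted}\bigr].
\]

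\emph{Conditioning on the truth of $i$.} We split each term on the value of $\sstar(i)$.
\begin{itemize}[leftmargin=*,nosep]
  \item On the event $\sstar(i)=0$, every row whose witness contains $i$ is false, since oracle derivations require every semantic decision on the witness to hold. That part of the sum is $\Pr[\sstar(i)=0]\cdot\E[\#\{o\ni i: \text{others accept}\}\mid\sstar(i)=0]$. By calibration $\Pr[\sstar(i)=0]=1-p_i$, and the expectation is $d^{\Pol}(i\mid\neg)$ by definition. This gives $(1-p_i)\,d^{\Pol}(i\mid\neg)$.
  \item On the event $\sstar(i)=1$, a row containing $i$ is false only because some other decision on its witness is a false positive.
\end{itemize}

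\emph{Main obstacle: the $\sstar(i)=1$ branch.} Those false rows would be counted as errors even if $i$ were correct, so I have to decide how to attribute them.
\begin{itemize}[leftmargin=*,nosep]
  \item I would first read the statement as the false rows \emph{caused by} $i$'s error. Then the $\sstar(i)=1$ branch is charged to the other decisions, consistent with how \cref{prop:pruning} and \cref{def:contribution} attribute errors to individual decisions.
  \item Alternatively, I would note that with a single semantic operator this branch is empty, so the formula is exact.
\end{itemize}
I would state this attribution explicitly in the proof rather than hide it.

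\emph{The ``in particular'' clause.} If the other decisions reject every row of $i$ whenever $\sstar(i)=0$, then the conditional count is zero, so $d^{\Pol}(i\mid\neg)=0$. The added false rows are then $0$ for every $p_i$. Consequently $\E[\TP]$ can only increase, and the predicted precision does not drop.

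Finally, a correlation example in the spirit of \cref{ex:two-ops} would show why the result is order-dependent. A downstream predicate that is correlated with $s$ filters out $i$'s false rows, whereas with the reverse order $i$ would be decided last and $d$ would equal the full number of accepted rows.
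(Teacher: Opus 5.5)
Your argument is the paper's: condition on $\sstar(i)=0$, which has probability $1-p_i$, and observe that every returned row through $i$ is then a false positive, returned exactly when the other decisions on its witness accept. Linearity of expectation gives the first claim, and the second claim is the case $d^{\Pol}(i\mid\neg)=0$.

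The $\sstar(i)=1$ branch you flag is one the paper's proof passes over silently, so your attribution convention is a fair reading of the statement. Carry that convention through to your last step as well. With several operators, which is the only setting where $d^{\Pol}(i\mid\neg)=0$ can hold for an $i$ that has rows, that branch can still add false rows through the other decisions' errors. So ``the predicted precision does not drop'' holds only for the false rows attributable to $i$'s own error.
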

\begin{proof}
If $\sstar(i)=0$, which has probability $1-p_i$, every returned row whose witness contains $i$ is a false positive, and the rows returned are exactly those accepted by all other decisions on their witness. Linearity of expectation gives the first claim; the second is the case $d^{\Pol}(i\mid\neg)=0$.
\end{proof}
Since $d^{\Pol}(i\mid\neg)$ depends on the decisions on $i$'s witnesses and on how they treat $i$'s false positives, the precision cost of a threshold depends on the operator order and on correlations between predicates. This is in contrast to recall (\cref{prop:recall}).
For example, if most reviews that wrongly pass the fire filter concern unrelated defects, the second predicate of \cref{ex:two-ops} rejects their rows anyway, and a looser threshold for the first filter saves escalations with little loss in output precision. This suggests two heuristics:
\begin{itemize}[leftmargin=*]
  \item \emph{Recall pushdown}: the recall budget should be spent on instances with small contribution $c(i)$.
  \item \emph{Precision pull-up}: operators upstream of strict, independent filters may run with lower precision.
\end{itemize}

\paragraph{Computing fan-out.}
For acyclic relational subqueries, per-tuple witness counts can be computed in linear time with counting semijoins (Yannakakis-style~\citep{yannakakis1981} bottom-up and top-down passes with counts). $c(i)$ additionally requires downstream decision probabilities (\cref{sec:estimation}).

\section{Estimating Quality from Calibrated Confidences}\label{sec:estimation}

With calibrated scores, each semantic decision can be modeled as a Bernoulli variable, and the expected quality of a plan becomes a query over a probabilistic database. \Cref{def:predicted} covered a single operator under bag semantics. In this section, we consider general SPJ plans, set semantics, imperfect judges, and miscalibrated models.

\paragraph{Output level.}
We treat the decision variables $x_i$ as independent, with $\Pr[x_i]=q_i$ the probability that $i$ is a correct positive under the policy's action ($p_i$ on accept, $0$ on reject, $\alpha_J\,p_i$ on escalation to a judge of recall $\alpha_J$). Then $\Pr[o\text{ returned correctly}]=\Pr[\Phi_o]$, which amounts to query evaluation over a tuple-independent probabilistic database~\citep{dalvi2007dichotomy,suciu2011probdb}.

\begin{definition}[Lineage-inducing query]\label{def:lineage-query}
For a plan whose logical part is a self-join-free SPJ query $Q$, the \emph{lineage-inducing query} $Q^{\mathrm{lin}}$ is the conjunctive query obtained by replacing every semantic operator $s$ over relations $R_1,\dots,R_k$ by a fresh probabilistic relation $S_s(\kappa)$ whose tuple $\kappa$ has probability $q_\kappa$. Relational atoms are deterministic.
\end{definition}

The lineage of an output tuple $o$ in the plan coincides with its lineage in $Q^{\mathrm{lin}}$, so $\Pr[o\text{ returned correctly}]=\Pr[o\in Q^{\mathrm{lin}}(\D)]$.

\begin{example}[Bag vs.\ set semantics]\label{ex:set}
Suppose the running example returns distinct customers (\texttt{SELECT DISTINCT} \texttt{o.customer}) and that customer Ann bought both the product of $r_2$ ($p=0.9$) and that of $r_4$ ($p=0.3$), both of which are accepted. Under bag semantics, Ann appears twice, and the expected number of correct rows for Ann is the sum $0.9+0.3=1.2$. Under set semantics, Ann is returned correctly iff $x_2\vee x_4$ holds, which has probability $1-(1-0.9)(1-0.3)=0.93$. Sums of products remain easy to compute for larger queries, whereas probabilities of DNFs in general do not.
\end{example}

\begin{theorem}[Complexity of quality evaluation]\label{thm:hardness}
Let $P$ be a plan over a self-join-free SPJ query with semantic selections and calibrated, independent decisions.
\begin{enumerate}[leftmargin=*,nosep]
  \item Under bag semantics, $\E[\TP]$, $\E[\FP]$, and $\E[\FN]$ are computable in time polynomial in $|\D|$.
  \item Under set semantics, $\E[\TP]$ is computable in polynomial time if $Q^{\mathrm{lin}}$ is hierarchical. If every relation of $Q$ carries a semantic selection (so every atom of $Q^{\mathrm{lin}}$ is probabilistic) and $Q^{\mathrm{lin}}$ is not hierarchical, it is \#P-hard.
\end{enumerate}
\end{theorem}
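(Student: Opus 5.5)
For part~1, the plan is to use linearity of expectation over output tuples, exactly as in the sum-of-products formula of \cref{sec:propagation}. Under bag semantics each candidate output row $o$ has a single witness $w(o)$, a set of decision instances. By independence, $o$ is a true positive with probability $\prod_{j\in w(o)}q_j$. It is returned at all (true or false) with probability $\prod_{j\in w(o)}a_j$, where $a_j$ is the probability that $j$ is accepted or escalated and the escalation answers positively; $a_j$ is $1$ for accepted, $0$ for rejected, and $\alpha_J p_j+(1-\alpha_J')(1-p_j)$ for escalated instances, or simply $p_j$ under a perfect judge. It is a true output with probability $\prod_{j\in w(o)}p_j$. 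Then
\[
\E[\TP]=\sum_o\prod_{j\in w(o)} q_j,\qquad \E[\FP]=\sum_o\prod_j a_j-\E[\TP],\qquad \E[\FN]=\sum_o\prod_j p_j-\E[\TP].
\]
Here the sums range over the rows of the \emph{unfiltered} join, i.e.\ the query with every semantic selection removed. That join has polynomially many rows in $|\D|$ for a fixed query, so the whole computation is polynomial. For acyclic queries the same sums can be obtained in linear time by the sum-product message passing already described, but this refinement is not needed for the theorem.

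For part~2, the plan is to reduce to the Dalvi--Suciu dichotomy for self-join-free conjunctive queries over tuple-independent databases~\citep{dalvi2007dichotomy}. By \cref{def:lineage-query} and the remark after it, $\Pr[o \text{ returned correctly}]=\Pr[o\in Q^{\mathrm{lin}}(\D)]$, and $\E[\TP]=\sum_o \Pr[o\in Q^{\mathrm{lin}}(\D)]$ over the polynomially many candidate answers $o$. Substituting the head constants of $o$ gives a Boolean query that remains hierarchical whenever $Q^{\mathrm{lin}}$ is. Because $Q^{\mathrm{lin}}$ is self-join-free (each $S_s$ is fresh), the safe-plan algorithm evaluates each such Boolean query in polynomial time, which gives the upper bound. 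One point to check is that relational atoms are deterministic. A deterministic atom is a special case of a probabilistic one with all probabilities $1$, so upper bounds transfer directly.

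The lower bound is the main obstacle. Deterministic atoms can break hardness: the canonical hard query $R(x),S(x,y),T(y)$ is easy if $S$ is deterministic. This is exactly why the statement assumes that every relation carries a semantic selection. Under that assumption I would build a parsimonious reduction from evaluating the non-hierarchical Boolean conjunctive query $Q^{\mathrm{lin}}$ on an arbitrary tuple-independent instance, which is \#P-hard by the dichotomy. Given such an instance, I construct $\D$ so that each base relation $R$ contains exactly the tuples of the probabilistic instance, joined one-to-one with a fresh $S_{s_R}$ whose key $\kappa$ is the whole tuple. The semantic selection on $R$ reads all of $R$'s attributes, or equivalently an identifying text attribute, so each tuple receives its own decision instance with calibrated score $p_\kappa$ equal to the instance's probability. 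The policy accepts every instance, which makes $q_\kappa=p_\kappa$. Then $\Pr[\text{query true}]$ equals $\E[\TP]$ for a projection onto the empty head (or onto a single constant). Two details need checking here. First, the atom for $R$ together with $S_{s_R}$ must behave like a single probabilistic atom, so that the hierarchical structure is preserved; if the selection reads only some attributes, I would handle this by choosing texts that are distinct per tuple. Second, non-hierarchicality of $Q^{\mathrm{lin}}$ must coincide with that of the underlying query.
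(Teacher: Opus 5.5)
Your proposal follows the paper's route. For part~1 it uses linearity of expectation over the derivations of the unfiltered join, with a product over each witness. Your complement forms $\E[\FP]=\sum_o\prod_j a_j-\E[\TP]$ and $\E[\FN]=\sum_o\prod_j p_j-\E[\TP]$ are a clean way to make the paper's ``analogous'' step precise when a witness contains several decisions. For part~2 it uses safe plans per output in the hierarchical case. For hardness it uses the same embedding of a tuple-independent instance: a per-tuple text attribute, one selection per relation with $p_t=\mu(t)$, an accept-all policy, and the empty projection under set semantics.

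Two things need correcting. First, your motivating remark is incorrect. $R(x),S(x,y),T(y)$ stays \#P-hard when $S$ is deterministic: the standard reduction from \#PP2DNF uses a deterministic edge relation $S$ and probability $1/2$ on $R$ and $T$. The query becomes tractable when $R$ or $T$ is deterministic, because you can conjoin that atom into $S$ and the rest is hierarchical.

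Second, the paper does not resolve the two details you leave open in general; it fixes them by construction. Its selection reads a fresh per-tuple attribute, so $R\wedge S_{s_R}$ acts as one probabilistic atom and $Q^{\mathrm{lin}}$ is identified with $q$. This is what the parenthetical ``every atom of $Q^{\mathrm{lin}}$ is probabilistic'' presupposes, and your fallback of choosing distinct texts does the same. That fallback is unavailable when a selection reads join attributes, and then the claim can genuinely fail. Take $R(u,a)$, $S(u,v,a)$, $T(v,a)$, with a selection $s_i$ on each relation reading only $a$, and the empty projection. $Q^{\mathrm{lin}}$ is not hierarchical, since $\mathrm{at}(u)=\{R,S\}$ and $\mathrm{at}(v)=\{S,T\}$. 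Yet the lineage is $\bigvee_{a\in A}x_{1,a}x_{2,a}x_{3,a}$, where $A$ is the deterministic set of values of $a$ that extend to a join row. Hence $\E[\TP]=1-\prod_{a\in A}(1-q_{1,a}q_{2,a}q_{3,a})$ is computable in polynomial time. Your argument, like the paper's, therefore proves hardness only for queries whose selections read a tuple-identifying attribute. Put differently, it shows that each non-hierarchical $q$ gives rise to an SRA query with \#P-hard quality evaluation, not that every $Q$ satisfying the literal hypothesis is hard.
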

\begin{proof}
(1) By linearity of expectation, $\E[\TP]=\sum_o\prod_{j\in w(o)}q_j$, which is a sum over derivations of products of independent factors. It can be computed by evaluating the join in the sum-product semiring, i.e., by multiplying a probability column along each derivation and summing at the end. This takes polynomial time, and linear time for acyclic queries~\citep{abokhamis2016faq}. The cases of $\E[\FP]$ and $\E[\FN]$ are analogous, with factor $1-p_j$ or $p_j$ for the erroneous decision.

(2) For each output $o$, $\Pr[o\text{ returned correctly}]=\Pr[o\in Q^{\mathrm{lin}}(\D)]$, and $\E[\TP]$ is the sum of these probabilities. If $Q^{\mathrm{lin}}$ is hierarchical, each of these probabilities can be computed with a safe plan in polynomial time~\citep{dalvi2007dichotomy}. For hardness, let $q$ be a non-hierarchical self-join-free Boolean CQ, e.g., $q=\exists x,y\,R(x)\wedge S(x,y)\wedge T(y)$, over a tuple-independent database with probabilities $\mu(t)$; computing $\Pr[q]$ is \#P-hard~\citep{dalvi2007dichotomy}. We construct an SRA instance with the same relations, give each tuple $t$ a text attribute, and apply one semantic selection to each relation with calibrated confidence $p_t=\mu(t)$. Let the plan accept every decision (a PD policy) and project onto the empty attribute list under set semantics. The plan returns the single empty tuple whenever the relational join is nonempty, and that tuple is a true answer iff some derivation has all its oracle labels equal to $1$. Hence $\E[\TP]=\Pr[q]$. The same construction works for any non-hierarchical self-join-free $q$, with $Q^{\mathrm{lin}}=q$. (When some relations carry no semantic operator, their atoms are deterministic and the hardness boundary can move; we leave the exact dichotomy for that case open.)
\end{proof}

The running example under bag semantics falls under case (1), and every quantity in \cref{tab:toy,ex:predicted,ex:price} is a sum of products. Case (2) becomes relevant, e.g., for \texttt{DISTINCT} over a chain of two semantic joins, where sampling-based approximations of DNF probabilities can be used.

\paragraph{Imperfect judges.}
If the judge's confidence is set to $1$, all guarantees become judge-relative. Instead, we model the accuracy $\alpha_J$ of the judge, estimated on a small gold set, and set $\mathrm{conf}_i=\alpha_J$ for escalated decisions in \cref{def:predicted}. Escalated instances are, by design, the most difficult ones, so the judge's accuracy on them is likely below its average accuracy. It should therefore be estimated on the escalated band and, following \cref{ex:judge}, weighted by contribution.

\paragraph{Assumptions on calibration.}
\begin{itemize}[leftmargin=*]
  \item Calibration holds with respect to a distribution of instances. Pushdown or a join changes which instances are evaluated, and the resulting covariate shift can break calibration. In \cref{prop:pc-unsound}, the same reviews make up $10\%$ of the population of one plan and $92\%$ of the other.
  \item Errors may be correlated, since the same document appears in many pairs (and as a shared anchor in batched request layouts). This violates the independence assumption.
  \item Prediction-powered inference (PPI)~\citep{angelopoulos2023ppi} combines model predictions with a small gold sample to correct bias and obtain confidence intervals, and SUPG-style procedures~\citep{kang2020supg} provide precision and recall guarantees for proxy-based selection.
\end{itemize}

\paragraph{Contribution-weighted PPI.}
For an output-level total $\theta=\sum_i w_i\,y_i$ (e.g., $\TP$ with $w_i=c_i\,[i\text{ accepted}]$ and $y_i=\sstar(i)$), the predicted value $\sum_iw_ip_i$ is biased if $M$ is miscalibrated. With $n$ gold labels drawn with probability proportional to $c_i$, the PPI estimate
\[
\hat\theta=\sum_iw_i\,p_i+\frac{1}{n}\sum_{j\in\text{gold}}\frac{w_j\,(y_j-p_j)}{c_j/\sum_kc_k}
\]
is unbiased, and sampling in proportion to contribution reduces its variance when fan-out is skewed. In \cref{sec:experiments}, this estimator removes the precision bias of an overconfident model using $100$ labels.

\section{The Optimization Problem and Complexity}\label{sec:optimization}

We now state the optimization problem: choose a logical plan and a decision policy that minimize expected cost subject to an output-level quality target. With predicted quality, the constraint can be checked, and with contributions, it can be decomposed over tuples. The general problem nevertheless remains hard.

\begin{problem}[Cost--Accuracy Semantic Query Optimization, CASQO]\label{prob:casqo}
Given $Q$, $\D$, a target $T=(\rho,\gamma,\delta)$, a set of decision models with prices, and a policy class $\PolClass$,
\[
\min_{L\oeq Q,\;\Pol\in\PolClass}\;\E\big[\Cost((L,\Pol),\D)\big]\quad\text{s.t.}\quad (L,\Pol)\in\Adm(Q,T,\D).
\]
\end{problem}

\paragraph{Variants.}
\begin{itemize}[leftmargin=*]
  \item \emph{Operator-level allocation}: $\Pol$ assigns one threshold or band per semantic operator.
  \item \emph{Instance-level allocation}: $\Pol$ may treat instances of the same operator differently based on context (fan-out, downstream cost); the running example belongs to this variant.
  \item \emph{Judge-relative vs.\ oracle-relative} target; \emph{expected vs.\ high-probability} constraint.
\end{itemize}

\subsection{Per-tuple decomposition}

Fix the logical plan and consider the predicted variant of the expected target. Since $\widehat{\Rec}\ge\rho$ iff $(1-\rho)\,\E[\TP]-\rho\,\E[\FN]\ge0$, and $\widehat{\Prec}\ge\gamma$ iff $(1-\gamma)\,\E[\TP]-\gamma\,\E[\FP]\ge0$, both constraints are linear in the expected counts. Under bag semantics and with fixed contributions $c_i$, each expected count is a sum of per-instance terms (\cref{def:predicted}). Instance $i$ adds $c_ip_i$ to $\TP$ if accepted or escalated, $c_ip_i$ to $\FN$ if rejected or skipped, and $c_i(1-p_i)$ to $\FP$ if accepted. Let $\Cost_i(a)$ denote the expected cost of action $a$ on $i$, including the downstream work it causes.

\begin{proposition}[Per-tuple Lagrangian decomposition]\label{prop:lagrange}
For multipliers $\lambda_R,\lambda_P\ge0$, let each instance choose
\[
a_i(\lambda)\in\arg\min_{a}\;\Cost_i(a)+\lambda_R\big(\rho\,\FN_i(a)-(1-\rho)\,\TP_i(a)\big)+\lambda_P\big(\gamma\,\FP_i(a)-(1-\gamma)\,\TP_i(a)\big).
\]
If the resulting policy meets both linearized constraints, and each multiplier is zero unless its constraint is tight, then the policy is optimal among all instance-level policies for the fixed logical plan and predicted target.
\end{proposition}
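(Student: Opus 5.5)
This is a standard weak-duality / complementary-slackness argument. Fix the logical plan, so that the contributions $c_i$ and per-action costs $\Cost_i(a)$ are fixed. An instance-level policy is then a vector $a=(a_i)_i$ of actions, one per instance. By the per-instance accounting above, $\E[\Cost]$, $\E[\TP]$, $\E[\FN]$ and $\E[\FP]$ are each a sum over instances of terms that depend only on $a_i$. The two predicted constraints are equivalent to
\[
g_R(a)=\sum_i\big((1-\rho)\TP_i(a_i)-\rho\,\FN_i(a_i)\big)\ge0,\qquad g_P(a)=\sum_i\big((1-\gamma)\TP_i(a_i)-\gamma\,\FP_i(a_i)\big)\ge0.
\]
The first step is to write down the Lagrangian
\[
\mathcal{L}(a,\lambda)=\sum_i\Cost_i(a_i)-\lambda_R\,g_R(a)-\lambda_P\,g_P(a).
\]
Because every term is additive across instances, $\mathcal{L}(a,\lambda)=\sum_i \ell_i(a_i,\lambda)$. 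Here $\ell_i$ is exactly the expression minimized in the statement. Hence minimizing $\mathcal{L}(\cdot,\lambda)$ over all policies is achieved by letting each instance independently choose $a_i(\lambda)$, since there is no coupling between instances.

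The second step is the chain of inequalities. Let $a'$ be any feasible instance-level policy, so that $g_R(a'),g_P(a')\ge0$. Since $\lambda\ge0$, we have $\Cost(a')\ge\mathcal{L}(a',\lambda)$. By the per-instance minimality of $a(\lambda)$, $\mathcal{L}(a',\lambda)\ge\mathcal{L}(a(\lambda),\lambda)$. By complementary slackness, $\mathcal{L}(a(\lambda),\lambda)=\Cost(a(\lambda))$: each multiplier is zero unless its constraint is tight, so $\lambda_Rg_R(a(\lambda))=\lambda_Pg_P(a(\lambda))=0$. Combining these gives $\Cost(a')\ge\Cost(a(\lambda))$. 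Together with the hypothesis that $a(\lambda)$ is feasible, this shows $a(\lambda)$ is optimal. No convexity is needed, since the argument uses only weak duality and so holds for the discrete action set.

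The step needing the most care is justifying the additive decomposition rather than the duality itself. One must fix what $\TP_i,\FN_i,\FP_i$ and $\Cost_i$ mean for each of the four actions: Skip counts like Reject for $\FN$, and Escalate contributes to $\TP$ but not $\FP$ under a perfect judge. One must also point out that the contributions $c_i$ are held fixed. This is precisely the hypothesis ``with fixed contributions'' in the setup: the $c^{\Pol}(i)$ of \cref{def:contribution} in general depend on the other decisions (caveat 2 of \cref{sec:propagation}). So the optimality claim is relative to the linearized model with those $c_i$, which is the fixed logical plan and predicted target of the statement. I would state this explicitly and also note that the predicted-ratio constraints are equivalent to the linear ones whenever $\E[\TP]+\E[\FN]>0$ and $\E[\TP]+\E[\FP]>0$, handling the degenerate zero-denominator case by convention.
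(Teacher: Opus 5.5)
Your proposal is correct and matches the paper's proof step for step. The Lagrangian with $g_R,g_P$ separates into per-instance terms, so $a(\lambda)$ minimizes it over all action vectors, and the chain $\Cost(a')\ge\mathcal{L}(a',\lambda)\ge\mathcal{L}(a(\lambda),\lambda)=\Cost(a(\lambda))$ via complementary slackness is exactly the Everett sufficiency argument the paper uses; your remarks on fixed contributions $c_i$ and on how \act{Skip} and \act{Escalate} are counted only spell out assumptions the paper states in the setup before the proposition.
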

\begin{proof}
Let $g_R(a)=(1-\rho)\E[\TP]-\rho\E[\FN]$ and $g_P(a)=(1-\gamma)\E[\TP]-\gamma\E[\FP]$. The Lagrangian $\Cost(a)-\lambda_Rg_R(a)-\lambda_Pg_P(a)$ is a sum of per-instance terms, so $a(\lambda)$ minimizes it over all action vectors. For any feasible $a'$, $\Cost(a')\ge\Cost(a')-\lambda_Rg_R(a')-\lambda_Pg_P(a')\ge\Cost(a(\lambda))-\lambda_Rg_R(a(\lambda))-\lambda_Pg_P(a(\lambda))=\Cost(a(\lambda))$, the last step by complementary slackness (Everett's sufficiency argument).
\end{proof}

In terms of \cref{fig:toyspace,fig:planspace}, CASQO asks for the leftmost point in the shaded region. Intuitively, \cref{prop:lagrange} charges each tuple for its own expected output errors, scaled by its contribution, which corresponds to the backward arrow in \cref{fig:chain}. An operator-level policy applies the same rule with all $c_i$ set to $1$ and thus treats a review with $2$ orders like a review with $100$. Since actions are discrete, multipliers $(\lambda_R,\lambda_P)$ that satisfy the conditions exactly need not exist. In that case, a grid search over the two multipliers that returns the cheapest feasible policy is a heuristic; this is the variant evaluated in \cref{sec:experiments}.

\begin{example}[Decomposition on the running example]\label{ex:lagrange}
With $\rho=\gamma=0.9$, an escalation cost of $50$, and row costs ignored, $\lambda_R=12$ and $\lambda_P=0$ yield the following values for $r_4$: reject costs $\lambda_R\rho\,c_4p_4=12\cdot0.9\cdot30=324$, accept costs $-\lambda_R(1-\rho)\cdot30=-36$ (accepting adds true positives, which the recall constraint rewards), and escalate costs $50-36=14$. Accepting minimizes this objective, so a price on precision is needed as well. With $\lambda_P=5$, accepting $r_4$ adds $5(0.9\cdot70-0.1\cdot30)=300$ for its $70$ expected false rows, which gives $264$ for accept, $324$ for reject, and $50-36-15=-1$ for escalate, so $r_4$ is escalated. For $r_3$ ($c=5$, $p=0.6$) the same prices give $3.9$ for accept, $32.4$ for reject, and $44.9$ for escalate; similarly, $r_1$ and $r_2$ are accepted and $r_5$ is rejected. The resulting policy escalates $r_4$ only, as in \cref{ex:price}. Since the resulting recall is $0.998$, the recall constraint is not tight, and \cref{prop:lagrange} does not certify optimality. The policy is nevertheless optimal, since every feasible policy needs at least one LLM call (accepting $r_4$ would reduce predicted precision to $0.48$).
\end{example}

\subsection{Confidence-centric skipping}\label{sec:skip}

So far we assumed that the decision model scores every instance before the policy acts. This is the usual assumption of proxy-based query processing, where a cheap proxy is applied to all inputs~\citep{kang2017noscope,kang2020supg,kang2021abae}, and the cost of these calls is part of $\Cost$. It is not necessary, however. During execution, the decisions taken so far may already guarantee the target, regardless of how the remaining instances turn out. From that point on, the remaining instances can be skipped without being scored.

This differs from data skipping with min/max summaries such as small materialized aggregates~\citep{moerkotte1998sma}, i.e., zone maps. There, a block is skipped because its value range shows that no tuple in it can satisfy the predicate, and the result is unchanged. Here, an instance may well satisfy the predicate; it is skipped because the confidence accumulated on the instances processed so far shows that the output target holds whatever the instance turns out to be. We therefore call it \emph{confidence-centric skipping}. The result may change, but the target is guaranteed.

Skipping an instance never returns its rows, so it cannot add false positives; it can only add false negatives, at most one per row. The fan-out of an unprocessed instance is known from the relational part of the plan, e.g., by counting semijoins (\cref{sec:propagation}), without evaluating any semantic predicate. Let $D$ be the instances processed so far, with expected counts $\E[\TP_D]$, $\E[\FP_D]$, $\E[\FN_D]$ as in \cref{def:predicted}, and let $U$ be the unprocessed instances. For each $i\in U$, let $u_i\ge p_i$ be an upper bound on its confidence that is available without scoring $i$; $u_i=1$ always qualifies.

\begin{proposition}[Confidence-centric skipping]\label{prop:skip}
Under bag semantics, if
\[
\E[\TP_D]\;\ge\;\rho\Big(\E[\TP_D]+\E[\FN_D]+\sum_{i\in U}c_i\,u_i\Big)\qquad\text{and}\qquad \E[\TP_D]\;\ge\;\gamma\big(\E[\TP_D]+\E[\FP_D]\big),
\]
then the plan that skips every instance in $U$ meets the predicted target $(\rho,\gamma)$ for every assignment of confidences $p_i\le u_i$ to the instances in $U$.
\end{proposition}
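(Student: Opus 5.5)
The plan is to compute the predicted counts of the full plan that skips all of $U$, and to check the two linearized constraints of \cref{sec:optimization} directly. By \cref{def:predicted} under bag semantics with fixed contributions $c_i$, every expected count is a sum of per-instance terms. Skipping is treated like rejection in those sums: a skipped instance $i$ contributes $c_ip_i$ to $\FN$ and nothing to $\TP$ or $\FP$. No rows of a skipped instance are returned, which is exactly the remark preceding the statement. Hence for the skipping plan
\[
\E[\TP]=\E[\TP_D],\qquad \E[\FP]=\E[\FP_D],\qquad \E[\FN]=\E[\FN_D]+\sum_{i\in U}c_i\,p_i .
\]
I will take the contributions $c_i$ of instances in $D$ as fixed, as in \cref{prop:lagrange}. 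For the instances in $U$ I will take them to be the fan-outs obtained from the relational part, so that they do not depend on $p_i$ for $i\in U$.

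For precision, the second hypothesis is precisely $(1-\gamma)\E[\TP]-\gamma\E[\FP]\ge0$. This is equivalent to $\widehat{\Prec}\ge\gamma$ by the rewriting at the start of \cref{sec:optimization}, and it does not involve $U$ at all. For recall, I will use monotonicity in the unknown confidences. Since $c_i\ge0$ and $p_i\le u_i$, we have $\sum_{i\in U}c_ip_i\le\sum_{i\in U}c_iu_i$. Therefore $\E[\TP_D]+\E[\FN]$ is bounded above by the bracket in the first hypothesis, which gives $\E[\TP]\ge\rho(\E[\TP]+\E[\FN])$, i.e.\ $\widehat{\Rec}\ge\rho$. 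The argument holds for every assignment $p_i\le u_i$, since the only place these confidences enter is through this upper bound.

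Only routine points remain. The first is a degenerate denominator, where $\E[\TP]+\E[\FN]=0$ or $\E[\TP]+\E[\FP]=0$. I will adopt the convention that a ratio $0/0$ meets the target, or equivalently state the result in the linearized form, which the hypotheses give directly. The second is an escalation to an imperfect judge inside $D$. This changes only how $\E[\TP_D],\E[\FP_D],\E[\FN_D]$ are computed, not the argument.

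The step I expect to need the most care is justifying that the $c_i$ are fixed. With several semantic operators, $c^{\Pol}$ depends on other decisions (\cref{def:contribution}), and skipping $i\in U$ could in principle alter the contributions of instances in $D$ that share witnesses. I would handle this by restricting to the setting of \cref{def:predicted}, where the per-instance decomposition is exact. Alternatively, I would note that skipping only sets factors $q_i=0$, which can only lower $\E[\TP]$ terms already accounted for as $\FN$. The simplest honest route is to state the proof for fixed contributions, exactly as the surrounding decomposition assumes.
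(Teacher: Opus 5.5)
Your proposal is correct and follows the paper's proof. Skipping $U$ leaves $\E[\TP]$ and $\E[\FP]$ at their $D$-values and adds $\sum_{i\in U}c_ip_i\le\sum_{i\in U}c_iu_i$ to $\E[\FN]$, so the two hypotheses yield the recall and precision targets directly; your linearized formulation and your remarks on $0/0$ denominators and fixed contributions only make explicit assumptions the paper's short argument leaves implicit.
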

\begin{proof}
Skipping $U$ adds nothing to $\TP$ or $\FP$ and adds $\sum_{i\in U}c_ip_i\le\sum_{i\in U}c_iu_i$ to $\E[\FN]$. Hence $\widehat{\Rec}\ge\E[\TP_D]/(\E[\TP_D]+\E[\FN_D]+\sum_{i\in U}c_iu_i)\ge\rho$, and $\widehat{\Prec}=\E[\TP_D]/(\E[\TP_D]+\E[\FP_D])\ge\gamma$.
\end{proof}

With $u_i=1$, the guarantee holds for any confidences of the skipped instances, even if all of them were certain positives. The condition is checked before each instance is scored, so its cost is negligible.

\paragraph{Processing order.}
Whether and when the condition is reached depends on the order in which instances are processed. The order should be chosen from information that is available before scoring. With $u_i=1$, the bound depends on the unprocessed instances only through their residual mass $\sum_{i\in U}c_i$, which after any number $k$ of processed instances is smallest if the instances with the largest contributions come first. Decreasing contribution, here decreasing fan-out, is therefore the natural order; with skewed fan-out, a small head of the instances carries almost all output rows, and the tail never needs to be scored. In storage order, the residual mass shrinks only linearly and the condition is typically reached near the end, if at all. Ordering by contribution times a rough prior on the confidence can be worse, because the bound charges every unscored row as a potential true positive regardless of the prior; what matters is how fast the residual mass decreases. The order also determines how quickly a usable partial result is available and, under a population-calibrated policy, which instances enter the calibration sample.

\paragraph{Confidence summaries.}
Tighter bounds $u_i$ stop earlier. They can come from per-block summaries in the spirit of zone maps, storing, e.g., the maximum confidence of each block under a previously evaluated predicate or a reusable condition index~\citep{jevdb2026}; instances are then processed in decreasing order of $c_iu_i$. Such summaries help only if blocks group instances of similar confidence: if every block contains some confident instance, every block maximum is close to $1$ and nothing is gained. If no summary is available, an upper bound on the positive rate of the tail, e.g., from a small sample, can replace $u_i$; this stops earlier but gives a weaker, statistical guarantee.

\begin{example}[Skipping in the running example]\label{ex:skip}
Process the reviews of \cref{tab:toy} in decreasing order of fan-out, $r_4,r_2,r_3,r_5,r_1$, with the policy of \cref{ex:price}: escalate $r_4$ and accept $r_2$ and $r_3$. After $r_4$ and $r_2$, we have $\E[\TP_D]=30+36=66$, $\E[\FN_D]=0$, and $3+2+5=10$ unprocessed rows, so with $u_i=1$ the recall bound is $66/(66+10)=0.87<0.9$. After $r_3$, we have $\E[\TP_D]=69$, $\E[\FP_D]=4+2=6$, and $5$ unprocessed rows. The recall bound is $69/(69+5)=0.93$ and the precision is $69/75=0.92$, so $r_5$ and $r_1$ are skipped without being scored. The target holds whatever their confidences are, even if both were certain positives; with their actual confidences, the predicted recall is $69/(69+0.15+1.96)=0.97$. Now suppose a summary places $r_3$ and $r_5$ in a block whose maximum confidence is $0.6$ and $r_1$ in a block with maximum $0.98$. After $r_4$ and $r_2$, the residual bound is $5\cdot0.6+3\cdot0.6+2\cdot0.98=6.76$, the recall bound is $66/(66+6.76)=0.91$, and the precision is $66/70=0.94$, so already $r_3$ is skipped as well.
\end{example}

In our simulations, skipping leaves most instances unscored when fan-out is heavy-tailed, almost none in storage order, and even more with a summary clustered by confidence (\cref{sec:experiments}).

\subsection{Complexity}

\begin{proposition}[Instance-level dropping is NP-hard]\label{prop:np}
Choosing which instances to \act{Skip} so as to maximize saved cost subject to output recall $\ge\rho$ is NP-hard, even for one semantic filter followed by one join and with all confidences equal to $1$.
\end{proposition}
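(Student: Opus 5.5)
The plan is a reduction from \textsc{Set Cover}, which is strongly NP-hard, so the hardness survives with all numbers encoded in unary. I will use the set-semantics variant of the output (\texttt{SELECT DISTINCT o.customer}). The bag-semantics version with one filter and one join makes losses additive over reviews, because each output row's witness contains exactly one review. It is therefore only a knapsack in which the weights are fan-outs. Since fan-outs are materialized as tuples, they are in effect written in unary, so that version is pseudo-polynomially solvable. The coverage-type loss mentioned in caveat~1 of \cref{sec:propagation} is what makes the problem hard, and it arises once several reviews can witness the same output.

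\emph{Construction.} Take a \textsc{Set Cover} instance with universe $U=\{e_1,\dots,e_n\}$, sets $S_1,\dots,S_m$ and budget $k$. Create one review $r_j$ per set $S_j$, each about its own product $j$, with confidence $p_{r_j}=1$. Create one customer $e$ per element. Insert an order $(j,e)$ into \textit{Orders} whenever $e\in S_j$. The query is $\pi_{\textit{customer}}(\semsel(\textit{Reviews})\bowtie_{\textit{product}}\textit{Orders})$ under set semantics. Because all confidences are $1$, $\Qstar(\D)$ contains every customer that belongs to some set, and we may assume this is all of $U$. Customer $e$ is returned correctly iff some non-skipped review $r_j$ has $e\in S_j$, so its lineage is the DNF $\bigvee_{j:e\in S_j}x_j$ from \cref{def:lineage}. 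Every decision is certain, so recall is deterministic and equals the fraction of elements covered by the kept sets. Charge a unit cost per decision-model call and set the row cost to zero. The saved cost of skipping a set $K^c$ of reviews is then exactly $|K^c|=m-|K|$.

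\emph{Equivalence.} With $\rho=1$, a skip set is feasible iff the kept reviews $K$ cover $U$. Saving at least $m-k$ is therefore equivalent to the existence of a set cover of size at most $k$. The construction is polynomial: the database has $m+\sum_j|S_j|$ tuples. If one prefers $\rho<1$, the same argument gives hardness via partial cover (maximum coverage). Alternatively, pad $U$ with dummy customers that are always covered, which forces full coverage of the original elements while keeping $\rho$ strictly below $1$.

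The step I expect to need most care is matching the statement's phrase ``one semantic filter followed by one join'' and its notion of ``saved cost'' with \cref{def:cost}. Downstream row costs of a skipped review would make savings depend on its fan-out. I would neutralize this by setting the per-row price to $0$, or else by padding so that every review has the same number of orders, using dummy orders whose customers are always covered. I would also state explicitly that the hardness uses set semantics, because that is what creates shared witnesses. For bag semantics I would remark that the problem reduces to an additive knapsack, which is only weakly hard unless savings can carry large binary-encoded token counts.
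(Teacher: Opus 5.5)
Your reduction is correct, but it takes a different route from the paper and proves a slightly different version of the statement. The paper stays in its default bag semantics and reduces from 0/1 knapsack. Review $r_k$ gets $p=1$ and fan-out $f_k=w_k$, its joined rows carry a downstream operator of total cost $v_k$, and $\rho=1-W/\sum_kw_k$ turns the recall constraint into the capacity constraint. Losses are therefore additive and no witnesses are shared. You instead add a \texttt{DISTINCT} projection and reduce from \textsc{Set Cover}. Your hardness comes from the coverage structure of the lineage $\bigvee_{j:e\in S_j}x_j$, i.e., the shared witnesses of caveat~1, and unit model-call costs make the savings count skipped reviews.

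The paper's route keeps the literal query shape and bag outputs. As you point out, however, it materializes $\sum_kw_k$ order tuples, so it is only a pseudo-polynomial reduction. In fact, with materialized fan-outs and $p\equiv1$, the bag problem is solvable by dynamic programming over the number of lost rows, which is bounded by the join size. The paper's argument therefore yields hardness only if fan-outs and savings are treated as binary-encoded numbers, e.g., bags with binary multiplicities. Your route gives strong NP-hardness with a polynomial-size database. The price is a duplicate-eliminating projection, which goes slightly beyond ``one semantic filter followed by one join'', so you should say explicitly that your result is about set outputs.

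Two small points. Your closing remark is imprecise: with $p\equiv1$ and materialized fan-outs, binary-encoded savings do not make the bag version hard, because the DP runs over lost rows, not over savings. Compact weights would need non-unit confidences, so that the weights $p_kf_k$ are binary rationals. Also, the dummy-customer padding is unnecessary, since $\rho=1-1/(2n)$ already forces every element to be covered.
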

\begin{proof}
We reduce from 0/1 knapsack with weights $w_k$, values $v_k$, capacity $W$, and target value $V$. For each item, create a review $r_k$ with $p=1$ and fan-out $f_k=w_k$, and give its joined rows a downstream per-row operator whose total cost over $r_k$'s rows is $v_k$ (e.g., by the token length of those rows). Skipping $r_k$ saves $v_k$ and loses $w_k$ true output rows. With $\rho=1-W/\sum_kw_k$, a skip set is feasible iff its total weight is at most $W$, and it saves at least $V$ iff the knapsack instance is a yes-instance.
\end{proof}
Note that the reduction does not use uncertainty; hardness already arises from fan-out-weighted recall. With uncertain confidences, the weights are scaled by $p_k$ and the reduction still applies.

\begin{proposition}[Operator-level recall allocation is convex]\label{prop:convex}
Under the assumptions of \cref{prop:recall}, let operator $j$ reach recall $R_j$ at expected cost $C_j(R_j)$, convex and nondecreasing on $(0,1]$. Then minimizing $\sum_jC_j(R_j)$ subject to $\prod_jR_j\ge\rho$ is a convex program, and at an interior optimum $R_j\,C_j'(R_j)$ is the same for every operator.
\end{proposition}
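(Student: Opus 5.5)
The plan is to prove convexity by changing variables. We substitute $R_j=e^{x_j}$ with $x_j\in(-\infty,0]$ and use the log-additive form of the constraint from \cref{prop:recall}.

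In the original variables the feasible set $\{R:\prod_jR_j\ge\rho\}$ is not obviously convex for more than one operator. It is, however, the superlevel set of the geometric-mean-type function $\sum_j\log R_j$, which is concave on $(0,1]^m$. So there are two routes. The first is to work directly in $R$: each $R_j\mapsto\log R_j$ is concave, their sum is concave, and a superlevel set of a concave function is convex. The constraint $\sum_j\log R_j\ge\log\rho$ together with $0<R_j\le1$ therefore defines a convex set. The objective $\sum_jC_j(R_j)$ is a sum of convex functions of separate coordinates, hence convex. This already shows that the problem is a convex program: a convex objective minimized over a convex feasible set. I would present this route as the main argument because it uses only the stated assumptions, namely that $C_j$ is convex and nondecreasing. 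In the log variables we would additionally need $C_j(e^{x})$ to be convex, which follows from $C_j$ being convex and nondecreasing only if $e^x$ were convex and we had composition rules. Since $e^x$ is in fact convex and $C_j$ is convex and nondecreasing, the composition $C_j\circ\exp$ is convex, so the second route works too. I would note this as an alternative formulation with a linear constraint $\sum_jx_j\ge\log\rho$.

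For the optimality condition I would use the log formulation, because its constraint is linear and Slater's condition or linearity makes KKT necessary without further qualification. At an interior optimum every $x_j<0$, so the bounds $x_j\le0$ are inactive. Stationarity of $\sum_jC_j(e^{x_j})-\mu(\sum_jx_j-\log\rho)$ gives $e^{x_j}C_j'(e^{x_j})=\mu$ for all $j$, that is, $R_jC_j'(R_j)=\mu$, which is the same value for every operator. If $C_j$ is not differentiable, I would replace the derivative by subgradients and state the condition as a common element of the sets $R_j\,\partial C_j(R_j)$. Otherwise I would simply assume differentiability at the optimum, as the statement implicitly does.

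The main subtlety, rather than a real obstacle, is the meaning of ``interior optimum''. I will read it as $R_j\in(0,1)$ for all $j$, so that the only active constraint is the product constraint. I will also remark that since each $C_j$ is nondecreasing, the product constraint is tight at an optimum unless some $C_j$ is locally flat, in which case $\mu=0$ and the common value is $0$. Either way the stated equalization holds.
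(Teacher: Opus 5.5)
Your proof is correct and follows the paper's: the feasible set is a superlevel set of the concave function $\sum_j\log R_j$, the objective is a separable sum of convex functions, and KKT stationarity gives the common value of $R_j C_j'(R_j)$. The only difference is that you apply KKT after substituting $R_j=e^{x_j}$. The paper applies it directly in $R$ as $C_j'(R_j)=\lambda/R_j$, which is equally justified because the single constraint has nonvanishing gradient $(1/R_j)_j$ (and Slater's condition holds for $\rho<1$), so the substitution is a convenience rather than a necessity.
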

\begin{proof}
The constraint is equivalent to $\sum_j\log R_j\ge\log\rho$. Since $\log$ is concave, the feasible set is a superlevel set of a concave function and hence convex; the objective is convex by assumption. The KKT conditions $C_j'(R_j)=\lambda/R_j$ give $R_jC_j'(R_j)=\lambda$.
\end{proof}
In words, the marginal cost of an additional unit of log-recall is the same for all operators, so operators that are cheap to make accurate should receive a larger share of the recall budget. With two filters and target $\rho=0.9$, equal cost curves give $R_1=R_2=\sqrt{0.9}\approx0.949$.

\Cref{tab:results} summarizes the results and their status.

\begin{table}[t]
\centering
\caption{Results and their status.}\label{tab:results}
\begin{tabularx}{\linewidth}{@{}lYl@{}}
\toprule
Result & Statement & Status \\
\midrule
Quality evaluation & Bag: PTIME. Set: PTIME for hierarchical, \#P-hard for non-hierarchical with all atoms probabilistic & \Cref{thm:hardness} \\
Plan invariance & Under PD policies, CASQO separates into classical cost-based QO plus a plan-independent policy choice & \Cref{prop:invariance} \\
PC unsoundness & Selection pushdown is not quality-sound for population-calibrated policies & \Cref{prop:pc-unsound} \\
Per-tuple decomposition & Two prices push an output target down to independent per-tuple decisions & \Cref{prop:lagrange} \\
Instance-level drop & NP-hard, even with one filter, one join, and $p\equiv1$ & \Cref{prop:np} \\
Confidence-centric skipping & Unprocessed instances can be skipped unscored once a bound on their contribution meets the target & \Cref{prop:skip} \\
Greedy approximation & Ranking by saved cost$/c(i)$ gives a knapsack-style guarantee under known $p_i$ and fan-out; coverage structure when witnesses overlap & open \\
Operator-level recall & Under independence, minimizing cost s.t.\ $\sum_j\log R_j\ge\log\rho$ is convex when per-operator cost--recall curves are convex & \Cref{prop:convex} \\
Precision coupling & No order-independent threshold assignment exists in general & open \\
\bottomrule
\end{tabularx}
\end{table}

\paragraph{Tractable class.}
Queries with acyclic relational parts, for which fan-out can be computed with counting semijoins in linear time, are tractable under bag semantics, and under set semantics if the semantic predicates are hierarchical. This matches the Yannakakis-style reductions already used in practice~\citep{jevdb2026}.

\section{Simulations}\label{sec:experiments}

\newcommand{\simBandPush}{$[0.16,0.77)$}
\newcommand{\simBandPull}{$[0.60,0.88)$}
\newcommand{\simIdentical}{all}

\newcommand{\simSaving}{81\%}

\newcommand{\psCostSpread}{1,261}
\newcommand{\psQmin}{0.00}
\newcommand{\psAdmSpread}{307}
\newcommand{\psCostOnlyCost}{2,621}
\newcommand{\psCostOnlyRec}{0.67}
\newcommand{\psCostOnlyPrec}{0.89}
\newcommand{\psEscAllCost}{102,817}
\newcommand{\psEscAllRec}{1.00}
\newcommand{\psEscAllPrec}{1.00}
\newcommand{\psOpTargetCost}{47,461}
\newcommand{\psOpTargetRec}{0.91}
\newcommand{\psOpTargetPrec}{0.98}
\newcommand{\psBestBandCost}{37,690}
\newcommand{\psBestBandRec}{0.91}
\newcommand{\psBestBandPrec}{0.94}
\newcommand{\psOursCost}{6,492}
\newcommand{\psOursRec}{0.90}
\newcommand{\psOursPrec}{0.93}
\newcommand{\swRatioNinety}{5.2}
\newcommand{\swRatioMin}{1.9}
\newcommand{\swRatioMax}{5.3}

\newcommand{\earlyUnscored}{60\%}
\newcommand{\earlySaveOne}{7\%}
\newcommand{\earlySaveTen}{34\%}
\newcommand{\earlyIndep}{17\%}

\newcommand{\earlyStop}{653}
\newcommand{\earlyStopZone}{260}
\newcommand{\earlyHeadShare}{98\%}

\newcommand{\ordDesc}{58\%}
\newcommand{\ordRand}{1\%}
\newcommand{\ordPrior}{24\%}
\newcommand{\zoneRand}{58\%}
\newcommand{\zoneClu}{82\%}
\newcommand{\zoneCluSave}{12\%}
\newcommand{\zoneCluRec}{0.896}

We illustrate the definitions with simulations of the running example. All parameters are stated below; experiments on a real engine, e.g., JEVDB with Jev and an LLM judge on SemBench~\citep{lao2026sembench}, are left for future work. The claims we test concern how quality and cost depend on confidences and fan-out, not how well a particular model is calibrated, and for these claims simulation is sufficient. The workloads are synthetic and were designed to make each effect visible. The reported magnitudes therefore illustrate what the definitions predict and should not be read as estimates of the gains of a deployed system. Two design choices are important. In E2, reviews of popular products are uncertain but mostly positive. In an earlier configuration, in which these reviews had confidences around $0.5$, pushdown escalated them anyway, and both plans met the target about equally often ($93\%$ and $96\%$). In E3 and in the plan-space figures, confidence is drawn independently of fan-out. If the uncertain reviews are exactly those with high fan-out, a single band already escalates them, and per-tuple prices save only about $1\%$. All numbers are produced by \texttt{python3 sim/simulate.py} (standard library only), which generates the tables and plot data used in this section.

\paragraph{Workload.}
We use the query of the running example with $2{,}000$ reviews per instance. A review belongs to a popular product with probability $0.1$ (fan-out uniform in $[50,400]$) and otherwise to a niche product (fan-out uniform in $[1,8]$). We first draw the confidence $p$ of the decision model and then the oracle label $\sstar\sim\mathrm{Bernoulli}(p)$, which makes the model calibrated. Reviews of niche products are mostly unambiguous, with $p\sim\mathrm{Beta}(0.3,8)$ with probability $0.85$ and $p\sim\mathrm{Beta}(30,1)$ otherwise. Reviews of popular products are likely positive but less certain, with $p\sim\mathrm{Beta}(8,2)$ (mean $0.8$); this creates the covariate shift of \cref{prop:pc-unsound}. A decision-model call costs $1$ unit, an LLM call (escalation, always answered correctly) costs $50$, and downstream work costs $0.05$ per joined row. The target is an output recall and precision of at least $0.9$. Each experiment averages over $200$ random instances ($50$ for E3), and $\pm$ denotes one standard deviation.

\begin{table}[t]
\centering
\caption{E1 and E2: pushdown vs.\ pull-up under a pointwise-deterministic band $[0.1,0.9)$ and under bands calibrated on a sample of $200$ instances of each plan's own population (operator-level target $0.95$). Quality is measured on output rows. Cost is relative to PD pushdown; pull-up evaluates every joined row and does not cache decisions. Admissible: fraction of instances meeting the output target $(0.9,0.9)$.}\label{tab:e12}
\small
\begin{tabular}{@{}lccccr@{}}
\toprule
Plan & Output recall & Output precision & F1 & Cost & Admissible \\
\midrule
PD, pushdown & 0.997\,{\scriptsize$\pm$0.001} & 0.983\,{\scriptsize$\pm$0.011} & 0.990\,{\scriptsize$\pm$0.006} & 1.00 & 100\% \\
PD, pull-up & 0.997\,{\scriptsize$\pm$0.001} & 0.983\,{\scriptsize$\pm$0.011} & 0.990\,{\scriptsize$\pm$0.006} & 88.22 & 100\% \\
PC, pushdown & 0.992\,{\scriptsize$\pm$0.020} & 0.924\,{\scriptsize$\pm$0.064} & 0.956\,{\scriptsize$\pm$0.037} & 0.86 & 62\% \\
PC, pull-up & 0.950\,{\scriptsize$\pm$0.020} & 0.977\,{\scriptsize$\pm$0.021} & 0.963\,{\scriptsize$\pm$0.016} & 71.40 & 96\% \\
\bottomrule
\end{tabular}

\end{table}

\paragraph{How much does plan choice matter?}
\Cref{fig:planspace} shows the plan space for one simulated database, with confidence independent of fan-out as in E3. Plans for the same query differ in cost by a factor of up to $\psCostSpread$ and cover the whole range of output quality, so an optimizer can make costly mistakes in both dimensions. Moreover, the two dimensions are not aligned; among the plans that meet the target, cost still varies by a factor of $\psAdmSpread$. The plans chosen by different strategies are far apart:
\begin{itemize}[leftmargin=*,nosep]
  \item a \emph{cost-only} optimizer with a fixed threshold selects the cheapest plan (cost \psCostOnlyCost), which returns only a fraction \psCostOnlyRec{} of the true orders;
  \item \emph{escalating all decisions} gives the exact result at cost \psEscAllCost;
  \item an \emph{operator-level target} of $0.95$ with JEVDB-style calibration meets the output target at cost \psOpTargetCost, and the best single band chosen by predicted output quality at cost \psBestBandCost;
  \item \emph{per-tuple prices} (\cref{prop:lagrange}) meet the target at cost \psOursCost.
\end{itemize}
\Cref{fig:sweep} repeats the comparison for targets between $0.8$ and $0.98$. Per-tuple pricing is cheaper than the best operator-level band by a factor of $\swRatioMin$ to $\swRatioMax$.

\begin{figure}[t]
\centering
\begin{tikzpicture}
\begin{axis}[paper, width=0.95\linewidth, height=7.6cm,
  xmode=log, xmin=1500, xmax=4e6, ymin=0, ymax=1.06,
  xlabel={realized cost (log scale)},
  ylabel={realized $\min(\Rec,\Prec)$ on output rows},
  legend columns=3, legend style={at={(0.5,1.02)},anchor=south,/tikz/every even column/.append style={column sep=8pt}}]
  \addplot[draw=none,fill=cAdm,forget plot] coordinates {(1500,0.9) (4e6,0.9) (4e6,1.06) (1500,1.06)} \closedcycle;
  \draw[cAqua!70!black,dashed] (axis cs:1500,0.9) -- (axis cs:4e6,0.9);
  \node[font=\footnotesize,color=cAqua!50!black,anchor=north east] at (axis cs:3.8e6,1.055) {output target $(0.9,0.9)$ met};
  \addplot[only marks, mark=*, mark size=1.2pt, color=cBlue, fill opacity=0.25, draw opacity=0.35]
    table[x=cost, y=q, restrict expr to domain={\thisrow{family}}{0:0}] {sim/data/plan_space.dat};
  \addplot[only marks, mark=square*, mark size=1.1pt, color=cOrange, fill opacity=0.25, draw opacity=0.35]
    table[x=cost, y=q, restrict expr to domain={\thisrow{family}}{1:1}] {sim/data/plan_space.dat};
  \addplot[only marks, mark=triangle*, mark size=1.6pt, color=cAqua!80!black, fill opacity=0.5, draw opacity=0.7]
    table[x=cost, y=q, restrict expr to domain={\thisrow{family}}{2:2}] {sim/data/plan_space.dat};
  \legend{pushdown + band, pull-up + band, pushdown + per-tuple prices}
  \node[circle,draw=black,fill=white,inner sep=0.6pt,font=\scriptsize\bfseries] at (axis cs:2621,0.6749) {1};
  \node[circle,draw=black,fill=white,inner sep=0.6pt,font=\scriptsize\bfseries] at (axis cs:102817,1.0) {2};
  \node[circle,draw=black,fill=white,inner sep=0.6pt,font=\scriptsize\bfseries] at (axis cs:47461,0.9142) {3};
  \node[circle,draw=black,fill=white,inner sep=0.6pt,font=\scriptsize\bfseries] at (axis cs:37690,0.9096) {4};
  \node[circle,draw=black,fill=white,inner sep=0.6pt,font=\scriptsize\bfseries] at (axis cs:6492,0.8986) {5};
\end{axis}
  \node[draw=cInk,fill=white,align=left,font=\footnotesize,anchor=north,inner sep=4pt] at ([yshift=-4pt]current axis.below south west -| current axis.south) {%
    \textbf{1} cost-only (fixed $0.5$ threshold): cost \psCostOnlyCost, recall \psCostOnlyRec\\
    \textbf{2} escalate all: cost \psEscAllCost\\
    \textbf{3} operator-level target $0.95$: cost \psOpTargetCost\\
    \textbf{4} best single band (predicted output quality): \psBestBandCost\\
    \textbf{5} \textbf{ours, per-tuple prices: cost \psOursCost}};
\end{tikzpicture}
\caption{Plan space of one simulated database ($2{,}000$ reviews, confidence independent of fan-out). Each point is a candidate plan: every band on a $0.05$ grid under pushdown and pull-up, and the per-tuple priced policies of \cref{prop:lagrange} under pushdown. Plans differ by up to $\psCostSpread\times$ in cost and from $\psQmin$ to $1$ in output quality; even among plans that meet the target, cost varies $\psAdmSpread\times$. The cost-only plan is cheap but misses about a third of the true orders. Our policy meets the predicted target (realized recall \psOursRec, precision \psOursPrec) at about $1/16$ of the cost of escalating all decisions and $1/7$ of the cost of an operator-level target.}\label{fig:planspace}
\end{figure}
\begin{figure}[t]
\centering
\begin{tikzpicture}
\begin{axis}[paper, width=0.78\linewidth, height=6cm, clip=false,
  xmin=0.795, xmax=1.0, ymin=0, ymax=112,
  xtick={0.80,0.84,0.88,0.92,0.96}, xticklabel style={/pgf/number format/fixed,/pgf/number format/precision=2},
  xlabel={output target $\rho=\gamma$}, ylabel={cost of the cheapest plan meeting it (thousands)}]
  \addplot[cInk, dashed, thick] table[x=target, y expr=\thisrow{esc}/1000] {sim/data/sweep.dat};
  \addplot[cBlue, very thick, mark=*, mark size=1.8pt] table[x=target, y expr=\thisrow{op}/1000] {sim/data/sweep.dat};
  \addplot[cOrange, very thick, mark=square*, mark size=1.8pt] table[x=target, y expr=\thisrow{ins}/1000] {sim/data/sweep.dat};
  \node[font=\footnotesize,color=cInk,anchor=south west] at (axis cs:0.80,102.82) {escalate everything};
  \node[font=\footnotesize,anchor=west,align=left] at (axis cs:0.982,70.276) {operator-level\\band};
  \node[font=\footnotesize,anchor=west,align=left] at (axis cs:0.982,36.708) {per-tuple\\prices (ours)};
  \draw[{Stealth[length=2mm]}-{Stealth[length=2mm]},cInk] (axis cs:0.90,7.128) -- (axis cs:0.90,37.146)
    node[midway,right,font=\footnotesize,align=left,fill=white,inner sep=1pt] {$\swRatioNinety\times$ cheaper\\at $0.9$};
\end{axis}
\end{tikzpicture}
\caption{Cost of meeting an output-level target, averaged over $10$ simulated databases (confidence independent of fan-out). Both policies are chosen on predicted quality and meet the target on realized labels on average. Per-tuple pricing is cheaper than the best operator-level band by a factor of $\swRatioMin$ to $\swRatioMax$. The gap narrows as $\rho$ approaches $1$, since then almost all uncertain tuples have to be escalated.}\label{fig:sweep}
\end{figure}

\paragraph{E1: plan invariance under a PD policy.}
With a fixed band, pushdown and pull-up return identical results on \simIdentical{} $200$ instances (\cref{tab:e12}, first two rows), as predicted by \cref{prop:invariance}, and differ only in cost. For PD policies, the choice between the two plans is thus a classical cost-based decision.

\paragraph{E2: plan dependence under a calibrated band.}
If each plan fits its band on a sample of the instances it evaluates, the resulting bands differ, with \simBandPush{} for pushdown and \simBandPull{} for pull-up on average. Pushdown samples reviews, among which popular products are rare, and therefore accepts popular reviews with $p\approx0.8$. Each such review contributes $50$ to $400$ rows that are wrong with probability $0.2$. As a result, output precision drops, and pushdown meets the target on far fewer instances than pull-up (\cref{tab:e12}, last two rows). This is the effect of \cref{prop:pc-unsound}, here with a random calibration sample; \cref{fig:e2e4}(a) shows the variation across databases. Pull-up meets the target but is about $70$ times more expensive, since it evaluates every joined row. A plan that is as cheap as pushdown and as accurate as pull-up must choose its decisions based on output quality rather than operator quality; E3 evaluates this approach.

\begin{table}[t]
\centering
\caption{E3: cheapest operator-level band vs.\ instance-level policy (\cref{prop:lagrange}, grid search over the two prices), both chosen on \emph{predicted} output quality to meet $(0.9,0.9)$ and evaluated on realized quality. Workload with confidence independent of fan-out; $50$ instances.}\label{tab:e3}
\small
\begin{tabular}{@{}lrrcc@{}}
\toprule
Policy & Cost & Escalations & Output recall & Output precision \\
\midrule
Operator-level band & 37,256 & 688 & 0.917\,{\scriptsize$\pm$0.027} & 0.911\,{\scriptsize$\pm$0.033} \\
Instance-level (priced) & 7,164 & 87 & 0.921\,{\scriptsize$\pm$0.022} & 0.910\,{\scriptsize$\pm$0.021} \\
\bottomrule
\end{tabular}

\end{table}

\paragraph{E3: pushing the target down to tuples.}
To isolate the effect of fan-out, we draw confidence independently of fan-out in this experiment ($p\sim\mathrm{Beta}(0.7,3)$ with probability $0.8$, else $\mathrm{Beta}(3,0.7)$). The operator-level policy considers all bands $[\ell,h)$ on a grid with step $0.02$, and the instance-level policy assigns actions with the per-tuple rule of \cref{prop:lagrange}. Both select the cheapest policy whose \emph{predicted} output quality meets the target, using only confidences and fan-outs, and both meet the target on the realized labels (\cref{tab:e3}). The instance-level policy is \simSaving{} cheaper. It escalates uncertain reviews with many rows and leaves uncertain reviews with few rows to the cheap model, whereas a band has to escalate every review in its range.

\begin{table}[t]
\centering
\caption{E4: predicted vs.\ realized output quality of the PD band policy, averaged over $200$ instances. Overconfident: logit of $p$ scaled by $2.5$. PPI: contribution-weighted correction with $100$ gold labels drawn in proportion to fan-out. Last column: mean absolute error of predicted precision per instance.}\label{tab:e4}
\small
\begin{tabular}{@{}lccccc@{}}
\toprule
Decision model & Pred.\ recall & True recall & Pred.\ precision & True precision & $|$Error$|$ (precision) \\
\midrule
Calibrated & 0.997 & 0.997 & 0.983 & 0.982 & 0.008\,{\scriptsize$\pm$0.006} \\
Overconfident & 1.000 & 0.994 & 0.981 & 0.872 & 0.109\,{\scriptsize$\pm$0.027} \\
Overconfident + PPI & 0.995 & 0.994 & 0.874 & 0.872 & 0.028\,{\scriptsize$\pm$0.020} \\
\bottomrule
\end{tabular}

\end{table}

\begin{figure}[t]
\centering
\begin{tikzpicture}
\begin{groupplot}[group style={group size=2 by 1, horizontal sep=1.7cm}, paper, width=0.47\linewidth, height=6cm]
\nextgroupplot[title={\small (a) E2: output precision per database},
  xmin=0.4, xmax=2.6, xtick={1,2}, xticklabels={PC pushdown, PC pull-up}, xmajorgrids=false,
  ymin=0.65, ymax=1.01, ylabel={realized output precision}]
  \addplot[draw=none,fill=cAdm,forget plot] coordinates {(0.4,0.9) (2.6,0.9) (2.6,1.01) (0.4,1.01)} \closedcycle;
  \draw[cAqua!70!black,dashed] (axis cs:0.4,0.9) -- (axis cs:2.6,0.9) node[pos=0.02,above right,font=\scriptsize,color=cAqua!50!black,inner sep=1pt]{target};
  \addplot[only marks, mark=*, mark size=1.2pt, color=cBlue, fill opacity=0.5, draw opacity=0.6]
    table[x expr={1+0.55*\thisrow{jitter}}, y=push] {sim/data/e2_strip.dat};
  \addplot[only marks, mark=square*, mark size=1.1pt, color=cOrange, fill opacity=0.5, draw opacity=0.6]
    table[x expr={2+0.55*\thisrow{jitter}}, y=pull] {sim/data/e2_strip.dat};
\nextgroupplot[title={\small (b) E4: predicted vs.\ realized precision},
  xmin=0.75, xmax=1.0, ymin=0.75, ymax=1.0, xlabel={predicted output precision}, ylabel={realized output precision},
  xticklabel style={/pgf/number format/fixed,/pgf/number format/precision=2},
  yticklabel style={/pgf/number format/fixed,/pgf/number format/precision=2},
  legend style={at={(0.02,0.98)},anchor=north west,fill=white,fill opacity=0.9,text opacity=1,draw=cGrid}]
  \addplot[cInk,thin,forget plot] coordinates {(0.75,0.75) (1,1)};
  \addplot[only marks, mark=*, mark size=1.2pt, color=cBlue, fill opacity=0.5, draw opacity=0.6]
    table[x=calpred, y=caltrue] {sim/data/e4_scatter.dat};
  \addplot[only marks, mark=square*, mark size=1.1pt, color=cOrange, fill opacity=0.5, draw opacity=0.6]
    table[x=overpred, y=overtrue] {sim/data/e4_scatter.dat};
  \addplot[only marks, mark=triangle*, mark size=1.5pt, color=cAqua!80!black, fill opacity=0.5, draw opacity=0.7]
    table[x=ppipred, y=ppitrue] {sim/data/e4_scatter.dat};
  \legend{calibrated, overconfident, overconfident + PPI}
\end{groupplot}
\end{tikzpicture}
\caption{(a) Each dot is one of $200$ simulated databases. A band calibrated on the pushdown population often falls below the output precision target; calibrated on the pull-up population it rarely does (\cref{prop:pc-unsound}). (b) Predicted output precision from $\sum_ic_i(1-\mathrm{conf}_i)$ against realized precision. Calibrated confidences lie on the diagonal; an overconfident model predicts about $0.98$ for databases that deliver $0.87$; contribution-weighted PPI with $100$ labels moves the predictions back to the diagonal, with more spread.}\label{fig:e2e4}
\end{figure}

\paragraph{E4: accuracy of the prediction.}
For a calibrated model, the output quality predicted from $\sum_ic_i(1-\mathrm{conf}_i)$ is within about $0.01$ of the realized quality on each instance (\cref{tab:e4} and \cref{fig:e2e4}(b)). An overconfident model, whose confidences are pushed toward $0$ and $1$, predicts a precision of $0.98$, while the realized precision is $0.87$; the hidden errors are concentrated in reviews with high fan-out. Contribution-weighted PPI with $100$ labels removes this bias on average and reduces the per-instance error from about $0.11$ to about $0.03$. In summary, the predicted quality used for equivalence and optimization is accurate for calibrated models and can be corrected with a small gold sample otherwise.

\begin{table}[t]
\centering
\caption{E5: scoring every review vs.\ staged processing with confidence-centric skipping (\cref{prop:skip}), both chosen on predicted output quality to meet $(0.9,0.9)$. Heavy-tailed fan-out, $2{,}000$ reviews, $20$ instances. Model cost: cost of one decision-model call (one LLM call costs $50$).}\label{tab:e5}
\small
\begin{tabular}{@{}rlrrrcc@{}}
\toprule
Model cost & Policy & Reviews scored & Escalations & Cost & Output recall & Output precision \\
\midrule
1 & score all & 2,000 & 145 & 18,389 & 0.924\,{\scriptsize$\pm$0.021} & 0.918\,{\scriptsize$\pm$0.016} \\
1 & staged + skip & 806 & 144 & 17,063 & 0.915\,{\scriptsize$\pm$0.019} & 0.916\,{\scriptsize$\pm$0.016} \\
10 & score all & 2,000 & 145 & 36,389 & 0.924\,{\scriptsize$\pm$0.021} & 0.918\,{\scriptsize$\pm$0.016} \\
10 & staged + skip & 712 & 157 & 24,149 & 0.925\,{\scriptsize$\pm$0.022} & 0.913\,{\scriptsize$\pm$0.014} \\
\bottomrule
\end{tabular}

\end{table}

\begin{figure}[t]
\centering
\begin{tikzpicture}
\begin{axis}[paper, width=0.8\linewidth, height=6cm, clip=false,
  xmin=0, xmax=2000, ymin=0, ymax=1.02,
  xlabel={reviews scored},
  ylabel={recall lower bound (\cref{prop:skip})},
  legend columns=3, legend style={at={(0.5,-0.25)},anchor=north,font=\scriptsize,/tikz/every even column/.append style={column sep=6pt}}]
  \draw[cAqua!70!black,dashed] (axis cs:0,0.9) -- (axis cs:2000,0.9) node[pos=0.6,below,font=\scriptsize,color=cAqua!50!black,inner sep=2pt,fill=white]{target $\rho=0.9$};
  \addplot[cAqua!80!black, very thick] table[x=k, y=zone] {sim/data/early_trace.dat};
  \addplot[cBlue, very thick] table[x=k, y=desc] {sim/data/early_trace.dat};
  \addplot[cOrange, very thick, dashed] table[x=k, y=rand] {sim/data/early_trace.dat};
  \legend{decreasing fan-out + clustered zone map, decreasing fan-out, storage (random) order}
  \draw[cAqua!50!black,thick] (axis cs:\earlyStopZone,0) -- (axis cs:\earlyStopZone,0.9);
  \draw[cBlue!70!black,thick] (axis cs:\earlyStop,0) -- (axis cs:\earlyStop,0.9);
  \node[font=\footnotesize,anchor=south,align=center,fill=white,inner sep=1pt] at (axis cs:\earlyStopZone,0.42) {stop at\\\earlyStopZone};
  \node[font=\footnotesize,anchor=south,align=center,fill=white,inner sep=1pt] at (axis cs:\earlyStop,0.3) {stop at\\\earlyStop};
\end{axis}
\end{tikzpicture}
\caption{Confidence-centric skipping on one simulated database with heavy-tailed fan-out (same prices for all three runs). The lower bound treats every unscored row as a true positive, or uses a block's maximum confidence when a zone-map summary is available; it requires no scoring of the unscored reviews. Once it reaches the target, the remaining reviews are skipped. In decreasing order of fan-out, the first \earlyStop{} reviews carry \earlyHeadShare{} of the output rows; with a zone map clustered by confidence, the bound reaches the target after \earlyStopZone{} reviews. In storage order the bound never reaches the target.}\label{fig:early}
\end{figure}

\paragraph{E5: confidence-centric skipping.}
This experiment uses a workload with heavy-tailed fan-out ($f=\lfloor U^{-1/0.3}\rfloor$ for uniform $U$, capped at $2{,}000$), with confidence drawn independently of fan-out as in E3. We compare the cheapest per-tuple priced policy that scores every review with the cheapest staged policy of \cref{prop:skip}, which processes reviews in decreasing order of fan-out and stops once the worst-case bound meets the target. Both meet the target on realized labels (\cref{tab:e5}). The staged policy leaves \earlyUnscored{} of the reviews unscored. With a decision-model call costing $1$, most of the cost comes from escalations and downstream rows, and the total cost drops by \earlySaveOne{}; with a call costing $10$, e.g., for long documents or a larger model, it drops by \earlySaveTen{}. In the run of \cref{fig:early}, the first \earlyStop{} reviews carry \earlyHeadShare{} of the output rows, and after them the remaining reviews cannot violate the target. With the less skewed fan-out of E3, \earlyIndep{} of the reviews are skipped. Hence, the assumption that every instance is scored can be relaxed in practice, and the benefit grows with the skew of fan-out and the cost of the decision model.

\begin{table}[t]
\centering
\caption{E6: processing order and confidence summaries for confidence-centric skipping, on the E5 workload ($10$ instances). Prices are re-tuned for each configuration. Unscored: fraction of reviews never scored; cost saving relative to the cheapest score-all policy; both given as mean / max over instances. Zone maps store the maximum confidence of each block of $64$ reviews.}\label{tab:e6}
\small
\begin{tabular}{@{}lcccc@{}}
\toprule
Order and summary & Unscored & Cost saving & Output recall & Output precision \\
\midrule
Decreasing fan-out & 58\% / 67\% & 7.0\% / 9.3\% & 0.912\,{\scriptsize$\pm$0.017} & 0.913\,{\scriptsize$\pm$0.016} \\
Storage (random) order & 1\% / 2\% & 0.7\% / 1.7\% & 0.914\,{\scriptsize$\pm$0.021} & 0.916\,{\scriptsize$\pm$0.016} \\
Increasing fan-out & 0\% / 0\% & 0.7\% / 1.7\% & 0.911\,{\scriptsize$\pm$0.015} & 0.916\,{\scriptsize$\pm$0.016} \\
Fan-out $\times$ noisy prior & 24\% / 29\% & 2.6\% / 3.3\% & 0.920\,{\scriptsize$\pm$0.018} & 0.917\,{\scriptsize$\pm$0.016} \\
Zone map, random layout & 58\% / 67\% & 7.1\% / 9.3\% & 0.912\,{\scriptsize$\pm$0.017} & 0.914\,{\scriptsize$\pm$0.016} \\
Zone map, clustered & 82\% / 87\% & 11.8\% / 13.8\% & 0.896\,{\scriptsize$\pm$0.015} & 0.912\,{\scriptsize$\pm$0.016} \\
\bottomrule
\end{tabular}

\end{table}

\paragraph{E6: order and summaries.}
\Cref{tab:e6} varies the processing order and the bounds $u_i$ of \cref{prop:skip}, re-tuning the prices for each configuration. Decreasing fan-out leaves \ordDesc{} of the reviews unscored; storage order and increasing fan-out leave almost none (\ordRand{} and less), because the residual mass shrinks too slowly. Ordering by fan-out times a noisy prior on the confidence (logit noise with standard deviation $1.5$) leaves only \ordPrior{} unscored, as the prior moves high-fan-out reviews with low prior confidence to the back, where their rows keep the worst-case bound low. A zone map whose blocks group reviews of similar confidence leaves \zoneClu{} unscored and saves \zoneCluSave{} of the cost; with blocks in random layout, every block maximum is close to $1$ and the zone map gives no benefit (\zoneRand{}). With the tighter bound, the realized output recall is closer to the target (\zoneCluRec{} on average, with predicted recall at least $0.9$). \Cref{fig:early} shows the bound for three of these configurations on a single database.

\paragraph{Limitations.}
The simulations assume independent decisions and a perfect judge, and, except in E4, a calibrated model. Correlated errors (e.g., one review in many pairs), judge errors on the most difficult instances (\cref{ex:judge}), and the calibration of real models across the populations induced by different plans have to be studied on a real engine.

\section{Related Work and Positioning}\label{sec:related}

The closest prior work allocates accuracy across ML predicates or guarantees per-operator quality; none defines output-level targets with join fan-out coupling or a semantic plan equivalence.

\paragraph{Expensive predicates in classical optimization.}
Predicate migration~\citep{hellerstein1993migration,hellerstein1998expensive} and the optimization of user-defined predicates~\citep{chaudhuri1999udp} place costly predicates in a plan by trading their cost against their selectivity, and adaptive schemes such as eddies~\citep{avnur2000eddies} and adaptive filter ordering~\citep{babu2004filters} reorder them at run time. These predicates are exact, so every placement returns the same answer and only cost is at stake. This corresponds to the setting of pointwise-deterministic policies (\cref{prop:invariance}); we study the case in which predicate placement also affects the result. Sampling-based approximate query processing such as BlinkDB~\citep{agarwal2013blinkdb} bounds the error of aggregates over samples of the data rather than the error of model decisions. Data skipping with min/max summaries such as small materialized aggregates~\citep{moerkotte1998sma} avoids reading blocks whose value range cannot satisfy a predicate and leaves the result unchanged; our confidence-centric skipping (\cref{sec:skip}) instead stops evaluating instances once the output target is guaranteed, which may change the result. Its stopping rule resembles that of the threshold algorithm for top-$k$ aggregation~\citep{fagin2003ta}, which stops once a bound on the unseen objects certifies the answer.

\paragraph{Accuracy targets for ML predicates and proxies.}
Probabilistic predicates~\citep{lu2018pp} split a query-level accuracy target across cheap filters placed ahead of expensive UDFs and choose among them by cost; they handle filters only, with operator-level budgets and no join fan-out. NoScope~\citep{kang2017noscope}, Tahoma~\citep{anderson2019tahoma}, and CORE~\citep{yang2022core} build cascades or proxy models for ML predicates over video and images, and BlazeIt~\citep{kang2019blazeit} and ABAE~\citep{kang2021abae} use proxies to answer aggregation and limit queries with accuracy guarantees. SUPG~\citep{kang2020supg} gives precision/recall-target selection with a proxy and statistical guarantees for a single selection, and TASTI~\citep{kang2022tasti} builds reusable semantic indexes that provide proxy scores across queries. These systems define accuracy for a single operator or a specific query shape and do not consider fan-out or the calibration population in plans with multiple operators.

\paragraph{Semantic operator systems.}
LOTUS~\citep{patel2025lotus} provides per-operator accuracy guarantees using model cascades; FrugalGPT~\citep{chen2023frugalgpt} applies cascades to individual LLM prompts. Palimpzest and its optimizer Abacus~\citep{liu2025palimpzest,russo2026abacus} search a cost--quality Pareto frontier over physical implementations, with quality estimated empirically per operator. DocETL~\citep{shankar2025docetl} rewrites document-processing pipelines with an LLM agent and evaluates candidate rewrites with LLM-based validation, and CAESURA~\citep{urban2023caesura} uses a language model to plan multi-modal queries; both assess plan quality empirically, without a model of how decision errors propagate. ThalamusDB~\citep{jo2024thalamusdb} supports approximate processing with error bounds and user labeling. Featurized-decomposition joins~\citep{zeighami2025fdj} prune semantic joins soundly and belong to our class of sound pruning. JEVDB~\citep{jevdb2026} places calibrated decision models on the fast path with semantic Bloom filters and per-predicate escalation bands calibrated against an LLM judge, restores predicate pushdown as the default on cost grounds, and leaves query optimization to future work. Larch~\citep{zhao2026larch} and iPDB~\citep{kumarasinghe2026ipdb} optimize semantic predicates for cost: Larch learns a short-circuit evaluation order over a boolean tree of semantic filters to minimize LLM tokens, and iPDB pulls predict operators up the plan and reorders semantic selects inside DuckDB. Neither system defines an output-level accuracy target. Larch assumes that reordering does not change output quality, i.e., it treats each predicate's decision as fixed. iPDB targets latency and cost; its join decomposition trades quality for speed under user control, and quality is evaluated empirically (F1).

\paragraph{Foundations we build on.}
Our results build on several lines of work. We use the dichotomy for probabilistic databases~\citep{dalvi2007dichotomy,suciu2011probdb} for quality estimation and provenance semirings~\citep{green2007provenance} for decision lineage. Attribution of output quality to decisions relates to sensitivity analysis~\citep{kanagal2011sensitivity} and tuple Shapley values~\citep{livshits2020shapley}, contributions are computed by message passing as in FAQ~\citep{abokhamis2016faq}, and our oracle-relative estimator is based on prediction-powered inference~\citep{angelopoulos2023ppi}.

\section{Open Problems}\label{sec:open}

Our definitions cover positive SRA with bag outputs. We list several directions in which the scope should be extended.
\begin{itemize}[leftmargin=*]
  \item \textbf{Aggregates and \texttt{GROUP BY}.} A misclassification can move a tuple from one group to another, so errors are not simply false positives or false negatives. Quality should then be defined via relative error or group-level F1. For \texttt{COUNT} and \texttt{SUM}, the cost of dropping a tuple is again proportional to its fan-out.
  \item \textbf{Negation and \texttt{NOT EXISTS}.} A semantic false negative inside a negated subquery becomes a false positive in the output, which reverses the asymmetry between recall and precision discussed in \cref{sec:propagation}.
  \item \textbf{Top-$k$ and ranking.} It is open how ranking quality composes with joins.
  \item \textbf{Multi-query optimization.} Since different queries may have different targets, decisions can only be reused across queries if the confidences $p_i$ are stored rather than Boolean outcomes. A condition index is a special case.
  \item \textbf{Semantic zone maps.} Per-block summaries, such as label sets or the maximum confidence, tighten the bounds of confidence-centric skipping (\cref{sec:skip}) and can also serve for sound or approximate pruning. Our simulations show a benefit only when blocks group instances of similar confidence (\cref{tab:e6}). How to build and maintain such layouts for predicates that are not known in advance, e.g., by clustering on embeddings, is open.
  \item \textbf{Processing order.} Decreasing contribution is a natural order for skipping, but the order also affects partial results, latency, and, under population-calibrated policies, the calibration sample. A joint treatment of order, skipping, and calibration is open.
  \item \textbf{Adaptive re-optimization.} A plan may need to be revised when the observed selectivity or calibration of a semantic predicate differs from the estimates.
\end{itemize}

\section{Conclusion}\label{sec:conclusion}

In semantic query engines, the choice of a plan determines not only the cost of a query but also its result. We argued that optimizing such plans requires a precise definition of what is being optimized, and that calibrated confidences make such a definition possible. A calibrated confidence assigns an expected error to each semantic decision. Weighting these errors by each decision's contribution to the output gives the expected output quality of a plan without labeled data, and the same computation, applied in reverse, assigns a price for output error to every base and intermediate tuple. Based on this, we defined physical plans as logical plans with decision policies, output-level quality targets, a hierarchy of plan equivalence that ends with the set of admissible plans, and the corresponding optimization problem.

Our results delimit the problem. Under pointwise-deterministic policies, accuracy does not depend on the plan, and optimization reduces to the classical case. The interaction between cost and accuracy is caused by calibrated, context-aware, or approximate components. In particular, selection pushdown, the default rewrite in current systems, is not quality-sound once bands are calibrated on the candidates of the plan itself. Expected quality can be computed efficiently under bag semantics and, under set semantics, follows the dichotomy for probabilistic databases when every relation carries a semantic predicate. Selecting tuples to drop is NP-hard, while the output target decomposes into per-tuple decisions with two prices. Moreover, once the decisions taken so far guarantee the target, the remaining tuples can be skipped without being scored, so confidences are only needed for the tuples that matter.

Our simulations illustrate these effects on synthetic data. The next step is to measure them on real engines, models, and benchmarks. Beyond that, the open problems in \cref{sec:open}, in particular aggregates and negation, correlated decisions, and adaptive re-optimization for miscalibrated confidences, remain to be addressed.

\bibliographystyle{plainnat}
\bibliography{refs}
\end{document}